\documentclass[
 amsmath,amssymb,
 aps, reprint,
prl
]{revtex4-2}

\usepackage{etoolbox}

\makeatletter
\patchcmd{\@bibdataout@aps}
  {author="08"}
  {author="48"}
  {}
  {\PackageWarning{main}{REVTeX jnrlst patch failed}}
\makeatother

\usepackage{graphicx}
\usepackage{dcolumn}

\usepackage{physics}
\usepackage{amsfonts, amsmath}
\usepackage{dsfont}
\usepackage{amsthm}
\usepackage{bm}

\usepackage{bbold}
\usepackage{color}
\usepackage[normalem]{ulem}
\usepackage{tabularx}

\usepackage{comment}

\usepackage{tikz}

\newtheorem{theorem}{Theorem}
\newtheorem{corollary}{Corollary}

\newtheorem{definition}{Definition}

\usepackage{hyperref}

\begin{document}
\definecolor{navy}{RGB}{46,72,102}
\definecolor{pink}{RGB}{219,48,122}
\definecolor{grey}{RGB}{184,184,184}
\definecolor{yellow}{RGB}{255,192,0}
\definecolor{grey1}{RGB}{217,217,217}
\definecolor{grey2}{RGB}{166,166,166}
\definecolor{grey3}{RGB}{89,89,89}
\definecolor{red}{RGB}{255,0,0}

\title{Bound Entanglement Is Insufficient for an Exponential Quantum Learning Advantage}

\author{Hyeongu Kang}

\author{Sangwoo Jeon}

\author{Changhun Oh}
\email{changhun0218@gmail.com}
\affiliation{Department of Physics, Korea Advanced Institute of Science and Technology, Daejeon 34141, Korea}

\begin{abstract}
While entanglement is known to enable exponential improvements in the sample complexity of quantum learning, it remains unclear which properties of entangled resources are responsible for such improvements.
We address this question through the reduction criterion, a condition obeyed by all bound-entangled states.
In $n$-qubit Pauli-channel learning, we show that restricting either the input states or the measurement effects to satisfy this criterion rules out an exponential advantage for incoherent adaptive protocols.
An exponential lower bound persists for the one-sided coherent adaptive protocols considered here, even when the unrestricted side retains quantum correlations across channel uses.
Using conditional min-entropy, we further quantify how the sample-complexity lower bounds weaken as larger violations of the reduction criterion are allowed.
Finally, we show that the same obstruction appears in conjugate-state learning: restricted joint measurements cannot reproduce the logarithmic-sample advantage of unrestricted joint measurements on $\rho\otimes\rho^*$. 
These results identify violation of the reduction criterion as a necessary condition for an exponential advantage in the learning tasks considered here.
\end{abstract}

\maketitle 

\textit{Introduction.{\textemdash}} 
Quantum entanglement is a central resource in quantum information science, enabling information-processing capabilities beyond what is possible with classical resources alone~\cite{Horodecki_2009}. 
It underlies quantum speedups in computation~\cite{nielsen2010quantum}, enhanced precision in metrology~\cite{giovannetti2006quantum, giovannetti2011advances, polino2020photonic, Degen_2017}, and intrinsically quantum forms of communication~\cite{gisin2007quantum, kimble2008quantum}.
These examples establish entanglement not merely as a signature of nonclassicality, but as an operational resource whose usefulness is revealed through concrete tasks.

Quantum learning is one such setting, concerned with extracting useful information about unknown quantum systems from measurement data~\cite{aaronson2018shadow, huang2020predicting, huang2022quantum,  huang2021information,    caro2024learning, chen2023complexity}. 
It is essential for the characterization, verification, and control of quantum systems~\cite{gebhart2023learning}, yet it becomes increasingly demanding at large scales, where many natural learning tasks require exponentially many samples~\cite{Aaronson_2007, o2016efficient, haah2017sample, cramer2010efficient}. 
Entanglement can be a key resource for overcoming this bottleneck, enabling protocols with entangled inputs~\cite{PhysRevLett.133.230604, Liu_2025}, quantum memory~\cite{chen2022exponential}, or joint measurements~\cite{bubeck2020entanglement, chen2023does, aharonov2022quantum} that achieve exponential reductions in sample complexity.
Among these problems, Pauli channel learning provides a central benchmark for this phenomenon~\cite{Chen_2022, PhysRevLett.132.180805, PRXQuantum.6.020323, chen2023learnability}.
Pauli channels form a basic noise model in quantum information~\cite{nielsen2010quantum, flammia2020efficient, Flammia_2021}, and learning their eigenvalues exhibits an exponential separation between entanglement-assisted~\cite{Chen_2022} and entanglement-free schemes~\cite{PhysRevLett.132.180805, PRXQuantum.6.020323}.

This separation raises a sharper question about the underlying resource.
Recent work shows that its exponential advantage over entanglement-free strategies can persist even when the amount of entanglement in each input state becomes asymptotically small~\cite{Kim_2026}. 
Thus, the relevant resource cannot be characterized solely by the amount of entanglement. 
Rather, the advantage appears to depend on qualitative features of the entangled resource, motivating the central question of this work:
\begin{center}
\emph{What properties of entangled resources are necessary for an exponential quantum learning advantage?}
\end{center}

To address this question, we consider bound entanglement~\cite{Horodecki_1998, PRXQuantum.3.010101, amselem2013experimental, Hiesmayr_2025}, a distinguished class of genuinely entangled but nondistillable states.
Their nondistillability does not make them operationally equivalent to separable states, since bound-entangled states can outperform separable states in several information-processing tasks~\cite{Horodecki_1999, PhysRevLett.96.150501, PhysRevLett.134.120203, PhysRevLett.112.110502, singh2026ancillaassistedquantumprocess} such as channel discrimination~\cite{PhysRevLett.102.250501} and metrology~\cite{PhysRevResearch.3.023101}. 
It is therefore not evident whether they can retain the exponential learning advantage achievable with unrestricted entangled resources.

\begin{figure}[t]
    \centering
    \includegraphics[width=1.0\linewidth]{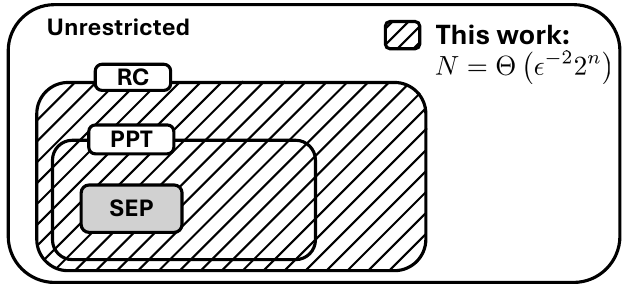}
\caption{ Resource hierarchy for incoherent adaptive Pauli-channel learning. For $0<\epsilon \leq \frac{1}{3},$ the tight scaling $N=\Theta \left( \epsilon^{-2} 2^{n} \right),$ previously known for SEP~\cite{PhysRevLett.132.180805, PRXQuantum.6.020323}, extends to the hatched one-sided reduction criterion (RC) class; the other side is unrestricted, and PPT is shown for reference.}
\label{fig_Van Diagram}
\end{figure}

Our results establish a sharp necessary condition for an exponential quantum learning advantage.
We prove that if either the system--ancilla input states or the measurement effects satisfy the reduction criterion~\cite{PhysRevA.59.4206, PhysRevA.60.898, Chitambar_2024}---a condition obeyed by all bound-entangled states---then Pauli-channel learning cannot retain an exponential improvement over entanglement-free strategies.
This obstruction holds for incoherent adaptive protocols, and an exponential lower bound continues to hold in the one-sided coherent adaptive models considered here.
Thus, an exponential improvement requires both the input and measurement resources to go beyond the reduction-criterion regime, as summarized in Fig.~\ref{fig_Van Diagram}.

We further make this boundary quantitative using conditional min-entropy~\cite{Tomamichel_2011}.
Rather than only distinguishing resources that satisfy or violate the reduction criterion, our bounds track how the sample complexity changes with the extent of the violation.
This yields a continuous family of lower bounds for restrictions imposed on either the input or measurement side.

We further examine conjugate-state learning, where the known advantage arises from joint measurements on $\rho\otimes\rho^*$~\cite{King_2024}. 
Unrestricted joint measurements on $\rho\otimes\rho^*$ can learn all displacement amplitudes of an unknown $d$-dimensional state using $O(\epsilon^{-4}\log d)$ samples.
For copywise adaptive protocols under the same reduction-criterion restriction on the normalized POVM elements, we prove a lower bound of $\Omega(\epsilon^{-2}\sqrt d)$.
This shows that the reduction-criterion restriction also limits a state-learning advantage generated entirely by measurement-side entanglement.

\medskip
\textit{Pauli channel learning.{\textemdash}} 
We first formalize the Pauli channel learning task.
Let $\mathcal{P}_n=\{I,X,Y,Z\}^{\otimes n}$ denote the set of
$n$-qubit Pauli operators.
An $n$-qubit Pauli channel is a CPTP map of the form
\begin{align}
    \Lambda(\rho)
    =
    \sum_{P\in\mathcal{P}_n}p_P P\rho P,
\end{align}
where $\{p_P\}$ is a probability distribution.
The coefficients $p_P$ are the Pauli error probabilities.
Each Pauli operator is an eigenoperator, $\Lambda(P)=\lambda_P P$.

Given $N$ uses of an unknown $n$-qubit Pauli channel, the learner aims to estimate all of its Pauli eigenvalues simultaneously:
\begin{definition}[Pauli channel learning]
Given $N$ uses of an unknown $n$-qubit Pauli channel $\Lambda$, 
the learner must output estimates $\{\hat{\lambda}_P\}_{P\in\mathcal{P}_n}$ such that, for every $\Lambda$, 
\begin{align}
    \max_{P\in\mathcal{P}_{n}} \left| \hat{\lambda}_P -\lambda_P \right| & \leq \epsilon
\end{align}
with probability at least $2/3$. 
The sample complexity is the minimum number $N$ of channel uses required to satisfy this guarantee. 
\end{definition}
\noindent

\begin{figure}[t]
    \centering
    \includegraphics[width=1.0\linewidth]{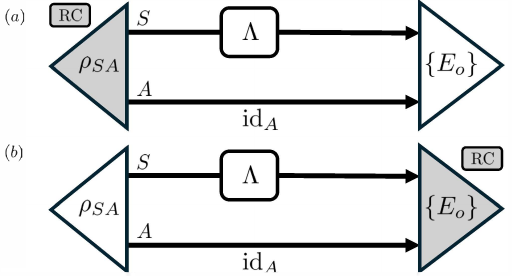}
    \caption{
Ancilla-assisted Pauli-channel learning with an incoherent adaptive protocol.
The unknown channel $\Lambda$ acts on the system register $S$, while the ancilla $A$ is retained noiselessly through $\mathrm{id}_A$.
The resulting joint state is then measured using the POVM $\{E_o\}$.
We consider one-sided restrictions in which either
(a) the input state $\rho_{SA}$ or
(b) each normalized POVM element satisfies the reduction criterion.
}
    \label{fig_PauliChannel_Learning}
\end{figure}

\medskip
\textit{Reduction-criterion restriction.{\textemdash}}
We impose a one-sided resource restriction on either the system--ancilla input states or the measurement effects.
For an input state $\rho_{SA}$, the reduction criterion requires
\begin{align}
    \rho_{SA}\le I_S\otimes \rho_A,
    \qquad
    \rho_A=\Tr_S\rho_{SA}.
\end{align}
Every bound-entangled state satisfies this criterion~\cite{PhysRevA.59.4206, PhysRevA.60.898}.
For a POVM $\{E_o\}$ on $SA$, we impose the analogous condition on each normalized effect:
\begin{align}
    \frac{E_o}{\Tr E_o}
    \le
    I_S\otimes \sigma_A^o,
    \qquad
    \sigma_A^o
    =
    \frac{\Tr_S E_o}{\Tr E_o},
\end{align}
for every outcome with $\Tr E_o>0$.
Thus, the measurement-side restriction also contains all bound-entangled normalized effects.

\begin{figure*}[t]
    \centering
    \includegraphics[width=0.97\textwidth]{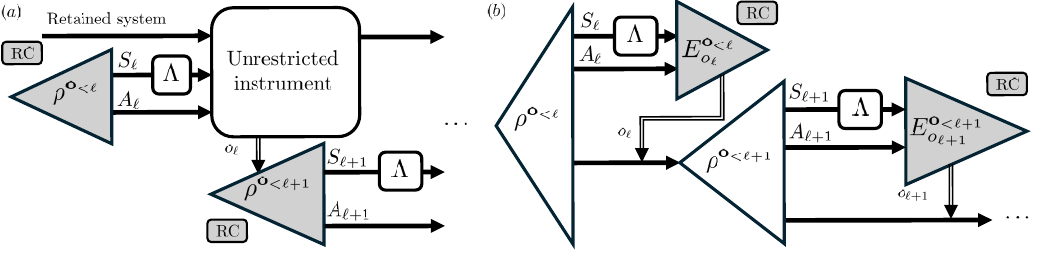}
\caption{
Coherent adaptive protocols under one-sided reduction-criterion (RC) restrictions.
(a) Input-restricted model: conditioned on the previous measurement outcomes $\mathbf{o}_{<\ell}$, the fresh input $\rho^{\mathbf{o}_{<\ell}}$ satisfies RC and is uncorrelated with the retained quantum memory; after the channel use, the output and memory may undergo arbitrary joint coherent processing.
(b) Measurement-restricted model: inputs may be entangled across channel uses, but the round-$\ell$ POVM acts only on $S_\ell A_\ell$, with each normalized POVM element satisfying RC.
}
    \label{fig_quantum_adaptivity}
\end{figure*}

We first consider protocols in which the input state and measurement in each channel use may depend on all previous measurement outcomes incoherently; no quantum correlation can be shared across the different channel uses.
Our first result shows that imposing the reduction-criterion restriction on either the input side or the measurement side removes the exponential learning advantage. 
The unrestricted side may still use arbitrary system--ancilla entanglement within each channel use as shown in Fig.~\ref{fig_PauliChannel_Learning}.
This freedom does not improve the optimal sample complexity over entanglement-free protocols.

\begin{theorem} \label{Thm_PauliChannel_ClassicalAdaptivity}
Assume $0<\epsilon\leq 1/3$.
Consider an incoherent adaptive protocol.
Suppose that one of the following holds throughout the protocol: either every system--ancilla input state satisfies the reduction-criterion restriction, or every measurement POVM satisfies the reduction-criterion restriction.
Then the optimal sample complexity for Pauli-channel learning is
\begin{align}
    N = \Theta\!\left(\epsilon^{-2}2^n\right).
\end{align} 
\end{theorem}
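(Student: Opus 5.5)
The upper bound $O(\epsilon^{-2}2^n)$ needs no entanglement. Separable inputs and separable measurements trivially satisfy the reduction criterion on both sides, so the known entanglement-free protocols of Refs.~\cite{PhysRevLett.132.180805, PRXQuantum.6.020323} already attain this scaling. The whole task is therefore the matching lower bound $\Omega(\epsilon^{-2}2^n)$.

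For the lower bound I will follow the standard learning-tree (Le Cam) argument used for entanglement-free Pauli-channel learning.

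\textbf{Hard ensemble.} Take the depolarizing channel $\Lambda_0=\mathcal{D}$, with $p_I=1$ weight replaced by the uniform distribution, so that $\lambda_P=0$ for $P\neq I$. Compare it with the family
\begin{align}
\Lambda_{P,s}(\rho)=\mathcal{D}(\rho)+3s\epsilon\,\tfrac{1}{2^n}P\Tr(P\rho),\qquad s=\pm1,\ P\neq I .
\end{align}
This family has $\lambda_P=\pm3\epsilon$, is CPTP for $\epsilon\le1/3$, and any successful learner distinguishes it from $\Lambda_0$. It then suffices to show that the averaged likelihood ratio of leaf distributions, $\mathbb{E}_{P,s}\,p_{P,s}(\text{leaf})/p_0(\text{leaf})$, stays close to one unless $N=\Omega(\epsilon^{-2}2^n)$. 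Using the one-sided bound $\mathrm{TV}\le$ (fraction of leaves where the ratio falls below $0.9$), I will aim for a per-round bound. Given the past outcomes, the ratio of the conditional outcome probabilities $\Tr[E_o(\Lambda_{P,s}\otimes\mathrm{id})(\rho)]/\Tr[E_o(\mathcal{D}\otimes\mathrm{id})(\rho)]$ equals $1+3s\epsilon\,f_P$. Here
\begin{align}
f_P=\frac{\Tr[E_o\,(P\otimes \Tr_S[(P\otimes I)\rho])]}{\Tr[E_o\,(I\otimes\rho_A)]},
\end{align}
using $\mathcal{D}\otimes\mathrm{id}(\rho)=\tfrac{I}{2^n}\otimes\rho_A$. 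Averaging $s$ over $\pm1$ kills the first-order term. A second-moment computation, as in Chen et al., then reduces the whole argument to showing
\begin{align}
\frac{1}{4^n-1}\sum_{P\ne I}f_P^2\le O(2^{-n})
\end{align}
on average over the outcome $o$.

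\textbf{Key step.} I will bound $\sum_P f_P^2$ using the reduction criterion on one side. Write $\sum_P P\otimes P^{T}$-type identities via the swap: $\sum_P \Tr_S[(P\otimes I)X]\,\Tr_S[(P\otimes I)Y]$ is expressed through $2^n\,\Tr_{SS'}[\mathrm{SWAP}\,(X\otimes Y)]$.

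In the input-restricted case, apply Cauchy--Schwarz with respect to the positive operator $I\otimes\rho_A$ and use $\rho\le I_S\otimes\rho_A$. This gives
\begin{align}
\sum_P\Tr_S[P\rho]\,\rho_A^{-1}\,\Tr_S[P\rho]\le 2^n\rho_A \cdot 2^n\cdot(\text{something bounded}),
\end{align}
so the sum of $f_P^2$, weighted by $\Tr[E_o(I\otimes\rho_A)]$ and summed over $o$, is at most $O(2^n)$ rather than the $O(4^n)$ attained by a Bell state. That is exactly a factor $2^{n}$ loss.

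In the measurement-restricted case the roles swap. Use $E_o\le \Tr(E_o)\, I\otimes\sigma_A^o$ together with the effect-side Cauchy--Schwarz, which bounds $\sum_P|\Tr[E_o(P\otimes Q_P)]|^2$ by $\Tr E_o\cdot 2^n\Tr[E_o(I\otimes\rho_A)]$-type quantities.

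\textbf{Conclusion.} Either way the per-round averaged second moment is $O(\epsilon^2 2^{-n})$. The martingale/leaf argument then gives $N=\Omega(\epsilon^{-2}2^n)$.

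The main obstacle is precisely this operator inequality. It requires choosing the right Cauchy--Schwarz split, so that the reduction criterion (rather than full separability) suffices, and it must handle noninvertible $\rho_A$ via supports. I would first try to verify it on the Bell state, where it must fail, and on the Werner/isotropic states at the reduction-criterion boundary, where it should be tight, before writing the general proof.
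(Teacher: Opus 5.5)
\textbf{There is a genuine gap at the key step.} Your outer reduction is the same as the paper's: the same spiked hard instance ($\lambda_P=\pm3\epsilon$ against full depolarization), sign averaging to remove the linear term, and the same per-round second-moment target. Your $f_P$ is exactly the paper's ratio $m_e/Q$ of the Walsh--Hadamard response to the null outcome probability. But the only step where the reduction criterion actually enters is left as a placeholder, and the placeholders as written fail.

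\begin{itemize}
\item With an $O(1)$ ``something bounded,'' your displayed bound $\sum_P\rho_A^{(P)}\rho_A^{-1}\rho_A^{(P)}\le 2^n\rho_A\cdot 2^n\cdot O(1)$, where $\rho_A^{(P)}:=\Tr_S[(P\otimes I)\rho]$, holds for \emph{every} state. The Pauli twirl always gives $\rho\le 2^n I_S\otimes\rho_A$, and the Bell state saturates the bound. So it extracts nothing from the criterion. What you need is $\sum_P\rho_A^{(P)}\rho_A^{-1}\rho_A^{(P)}\le 2^n\rho_A$.
\item Your normalization is off by $2^n$. Since $\Tr[E_o(I\otimes\rho_A)]=2^nQ(o)$, the correct bound with your weighting is $4^n$ (Bell gives $8^n$). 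The $O(2^n)$ you state would imply $N=\Omega(\epsilon^{-2}4^n)$, contradicting the upper bound.
\item On the measurement side, a bound involving $\Tr E_o$ cannot work, because $\sum_o\Tr E_o=2^nd_A$ grows with the ancilla dimension. You must pair $\sigma_A^o$ with $\rho_A$ and sum using $\sum_o\Tr(E_o)\sigma_A^o=2^nI_A$.
\end{itemize}

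\textbf{The missing idea is a single H\"older-type step.} The paper notes that Pauli conjugation on $S$ leaves $I_S\otimes\rho_A$ invariant. Hence $a_x(o):=\Tr[E_o(P_x\otimes I)\rho(P_x\otimes I)]\le t_o$ for all $x$, with $t_o=\Tr[E_o(I_S\otimes\rho_A)]$ for restricted inputs and $t_o=\Tr(E_o)\Tr[\sigma_A^o\rho_A]$ for restricted effects. Parseval then gives $\sum_e m_e(o)^2=4^{-n}\sum_x a_x(o)^2\le 4^{-n}t_o\sum_x a_x(o)=t_oQ(o)$. Together with $\sum_o t_o=2^n$, this yields $\sum_oQ(o)\sum_Pf_P^2\le 2^n$.

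Your Cauchy--Schwarz route can also be closed, as follows.
\begin{itemize}
\item Set $\tilde\rho:=(I\otimes\rho_A^{-1/2})\rho(I\otimes\rho_A^{-1/2})$, using the pseudo-inverse. The reduction criterion says exactly $0\le\tilde\rho\le I_S\otimes\Pi_A$, so $\tilde\rho^2\le\tilde\rho$.
\item The swap identity gives $\sum_P(\Tr_S[(P\otimes I)\tilde\rho])^2=2^n\Tr_S[\tilde\rho^2]\le 2^n\Pi_A$. This is precisely the operator inequality you need.
\item Cauchy--Schwarz for the positive functional $X\mapsto\Tr[(I\otimes\rho_A^{1/2})E_o(I\otimes\rho_A^{1/2})X]$, followed by POVM completeness, finishes the input case.
\item For restricted effects, self-duality of the Pauli twirl gives $\Tr[E_o(P\otimes\rho_A^{(P)})]=\Tr[\rho(P\otimes\Tr_S[(P\otimes I)E_o])]$. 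Repeat the argument with $\rho_A$ replaced by $\mu_o:=\Tr_SE_o$, summing with $\sum_o\mu_o=2^nI_A$.
\end{itemize}
Either way, the content is an operator-norm bound supplied by the criterion times a trace normalization. The paper's version is more elementary, since it avoids pseudo-inverses and the Cauchy--Schwarz step.
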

Since every bound-entangled state satisfies the reduction criterion, Theorem~\ref{Thm_PauliChannel_ClassicalAdaptivity} applies in particular to all bound-entangled inputs and normalized measurement effects.
This matches the known optimal sample complexity of entanglement-free schemes~\cite{PhysRevLett.132.180805, PRXQuantum.6.020323}.

\begin{proof}[Proof sketch]
The full proof is given in the Supplemental Material~\cite{supple}. 
The upper bound is achieved by the optimal entanglement-free protocol~\cite{PhysRevLett.132.180805, PRXQuantum.6.020323}.
For the lower bound, we use the standard randomly spiked Pauli-channel testing problem~\cite{PhysRevLett.132.180805, PRXQuantum.6.020323, Kim_2026}, in which the learner must detect a perturbation hidden in a uniformly random Pauli direction.
The key new ingredient is to show that the response bound underlying the entanglement-free lower bound remains valid under our one-sided reduction-criterion restriction.

The inequality $\rho_{SA}\leq I_S\otimes \rho_A$ bounds the average squared response of an input state to a random Pauli perturbation by $O(2^{-n})$, even when the state is entangled.
The analogous condition on normalized POVM elements gives the same bound when the restriction is imposed on the measurement side.
Because this estimate holds after conditioning on any previous measurement outcomes, it applies round by round to incoherent adaptive protocols.
Consequently, the total variation distance between the two measurement-outcome distributions is $O(N\epsilon^2 2^{-n})$, yielding $N=\Omega(\epsilon^{-2}2^n)$.
\end{proof}

Theorem~\ref{Thm_PauliChannel_ClassicalAdaptivity} shows that restricting either side of an incoherent adaptive protocol to the reduction-criterion regime eliminates the exponential advantage.
The other side may still use arbitrary system--ancilla entanglement within each channel use, yet the optimal sample complexity remains at the entanglement-free scaling.
Thus, an exponential improvement requires both the input and measurement resources to violate the reduction criterion.

We next consider two one-sided coherent adaptive models, which include our previous model.
In the input-restricted model, conditioned on prior outcomes,
each round uses a fresh input satisfying the reduction criterion
and uncorrelated with the retained quantum memory, while the
output and memory may subsequently be processed coherently.
In the measurement-restricted model, inputs may be entangled
across channel uses, but in each round the POVM acts only on
the current output and its associated ancilla, with every
normalized POVM element satisfying the reduction criterion.

\begin{theorem}
\label{Thm_PauliChannel_QuantumAdaptivity}
Assume $0<\epsilon\leq 1/3$.
Consider either of the two one-sided coherent adaptive models illustrated in Fig.~\ref{fig_quantum_adaptivity}.
In the input-restricted model, each round uses a fresh reduction-criterion-restricted input that is uncorrelated with the retained quantum system conditioned on the previous measurement outcomes, while arbitrary coherent processing of accumulated outputs is allowed.
In the measurement-restricted model, the inputs may be entangled across channel uses, while in each round a reduction-criterion-restricted POVM acts only on the current channel output and its associated ancilla.
In either model, Pauli-channel learning requires
\begin{align}
    N=\Omega\!\left(\epsilon^{-1}2^{n/2}\right)
\end{align}
channel uses.
\end{theorem}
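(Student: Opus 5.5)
We will reduce to the same randomly spiked Pauli-channel testing problem used for Theorem~\ref{Thm_PauliChannel_ClassicalAdaptivity}. Under the null hypothesis the channel is fully depolarizing, $\Lambda_0(\rho)=\Tr(\rho)\, I/2^n$. Under the alternative, a Pauli $Q\neq I$ is drawn uniformly and we use $\Lambda_Q(\rho)=\Lambda_0(\rho)+\epsilon' \Tr(Q\rho)\,Q/2^n$ with $\epsilon'=\Theta(\epsilon)$, which is a valid Pauli channel. Any learner meeting the guarantee distinguishes the two hypotheses with constant advantage, so it suffices to show that the final states, or outcome distributions, are close unless $N=\Omega(\epsilon^{-1}2^{n/2})$. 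In the coherent setting we cannot average over $Q$ round by round inside a product of classical likelihoods, as in the incoherent case. Instead we use a hybrid (telescoping) argument on the full quantum state of the protocol. For a fixed $Q$, let $\tau_\ell$ be the joint state of memory, outputs and classical record after $\ell$ rounds, with the first $\ell$ uses equal to $\Lambda_Q$ and the rest equal to $\Lambda_0$. Then $\|\tau_N^{(Q)}-\tau_0\|_1\le\sum_\ell \|(\Lambda_Q-\Lambda_0)_{S_\ell}\otimes \mathrm{id}(\omega_\ell)\|_1$, where $\omega_\ell$ is the round-$\ell$ input state under the null hypothesis, by data processing. The null state does not depend on $Q$, so we average over $Q$ afterwards and need $\mathbb{E}_Q\|(\Lambda_Q-\Lambda_0)\otimes\mathrm{id}(\omega_\ell)\|_1=O(\epsilon 2^{-n/2})$ per round. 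This gives total distance $O(N\epsilon 2^{-n/2})$ and hence the claimed bound. The loss from $\epsilon^2 2^{-n}$ to $\epsilon 2^{-n/2}$ is exactly the price of this first-order trace-norm hybrid.

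\emph{Input-restricted model.} Conditioned on previous outcomes, the fresh input $\rho_{SA}$ satisfies the reduction criterion and is in product with the memory $M$. Hence $(\Lambda_Q-\Lambda_0)\otimes\mathrm{id}$ applied to $\rho_{SA}\otimes\mu_M$ equals $\frac{\epsilon'}{2^n}(Q\otimes X_Q)\otimes\mu_M$, where $X_Q=\Tr_S[(Q\otimes I)\rho_{SA}]$. Its trace norm is $\epsilon' \|X_Q\|_1$, since $\|Q\|_1=2^n$. Cauchy--Schwarz gives $\mathbb{E}_Q\|X_Q\|_1\le\sqrt{\mathbb{E}_Q\|X_Q\|_1^2}$. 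The key estimate, also used in Theorem~\ref{Thm_PauliChannel_ClassicalAdaptivity}, is that $\rho_{SA}\le I_S\otimes\rho_A$ forces $\sum_Q \|X_Q\|_1^2 = O(2^n)$. I would prove this by writing $\|X_Q\|_1=\Tr(U_Q X_Q)$ for a unitary $U_Q$ on $A$. This gives $\|X_Q\|_1=\Tr[(Q\otimes U_Q)\rho_{SA}]$. We then bound the sum of squared correlations of the operators $Q\otimes U_Q$ using $\rho_{SA}\le I\otimes\rho_A$ together with the Pauli twirl identity $\sum_Q Q\,Y\,Q=4^n\Tr_S(Y)\otimes I/2^n$. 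The product structure with $\mu_M$ is exactly what lets the memory factor out.

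\emph{Measurement-restricted model.} Here the input may be entangled with everything else, so the state-side bound fails. Instead we bound the effect of the perturbation on the outcome distribution of round $\ell$. The POVM acts only on $S_\ell A_\ell$, while the remaining systems continue and are measured later, so the hybrid argument must be applied to a quantum-classical state. I would use the Heisenberg picture: the round-$\ell$ instrument followed by $(\Lambda_Q-\Lambda_0)^\dagger$ produces operators $\frac{\epsilon'}{2^n}\,Q\otimes \Tr_S[(Q\otimes I)E_o]$. The normalized reduction criterion $E_o\le \Tr(E_o)\, I_S\otimes\sigma^o_A$ bounds their contribution, via the same twirl calculation, by $\epsilon' 2^{-n/2}\sum_o\Tr(E_o)/2^n$ times a constant. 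The later rounds are handled through the classical outcome register.

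The main obstacle is this measurement-side step. The trace norm over $o$ against an arbitrary entangled remainder must be controlled by a quantity that, after averaging over $Q$, loses only $2^{-n/2}$. I would first try expressing the change in outcome probabilities as $\Tr[(Q\otimes Z)\,\cdot\,]$ against an operator whose norm is controlled by $\sigma_A^o$, and then apply Cauchy--Schwarz over $Q$ and $o$ simultaneously.
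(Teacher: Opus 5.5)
The genuine gap is the measurement-restricted round bound, which you leave open. The estimate you sketch for it cannot be right: $\sum_o\Tr(E_o)/2^n=d_A$ is unbounded in the ancilla dimension. Moreover, the object to control is not the change of the round-$\ell$ outcome probabilities. It is $\sum_o\|\Tr_{SA}[(E_o\otimes I_C)((\Lambda_Q-\Lambda_0)\otimes\mathrm{id}_{AC})(\tau)]\|_1$, a sum of trace norms of operators on the unmeasured remainder $C$, which is entangled with $S_\ell A_\ell$. For example, if $S_\ell$ is maximally entangled with $C$, the round-$\ell$ outcome distribution is identical under $\Lambda_0$ and $\Lambda_Q$, yet the conditional states of $C$ differ.

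The missing idea is to transfer the effect-wise inequality onto $C$. Since $I_S\otimes\sigma_A^o$ commutes with every $P_x\otimes I_A$, each Pauli-conjugated branch $B_o(x):=\Tr_{SA}[(E_o\otimes I_C)(P_x\otimes I_{AC})\tau(P_x\otimes I_{AC})]$ is dominated by the $x$-independent operator $D_o:=\Tr(E_o)\Tr_{SA}[(I_S\otimes\sigma_A^o\otimes I_C)\tau]$. Completeness, $\sum_o\Tr(E_o)\sigma_A^o=2^nI_A$ (automatic for $\sigma_A^o=\Tr_S E_o/\Tr E_o$), then gives $\sum_o\Tr D_o=2^n$. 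The paper packages this as the cq-state $\omega_{XOC}=4^{-n}\sum_{x,o}|x\rangle\langle x|\otimes|o\rangle\langle o|\otimes B_o(x)$ with $H_{\min}(X\mid OC)\geq n$, and applies a one-bit leftover-hash lemma.

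Your own Cauchy--Schwarz device also closes it. Set $K_{o,Q}:=4^{-n}\sum_x(-1)^{\langle x,Q\rangle}B_o(x)$, where $(-1)^{\langle x,Q\rangle}$ is the commutation sign of $P_x$ and $Q$; the round-$\ell$ difference on $C$ is exactly $\epsilon' K_{o,Q}$. Choose unitaries $V_{o,Q}$ with $\Tr(V_{o,Q}K_{o,Q})=\|K_{o,Q}\|_1$, and put $c_Q:=\sum_o\|K_{o,Q}\|_1$, $W_Q:=\sum_x(-1)^{\langle x,Q\rangle}|x\rangle\langle x|$, and $T:=\sum_Q c_Q W_Q\otimes\sum_o|o\rangle\langle o|\otimes V_{o,Q}$. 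Using $\omega_{XOC}\leq 4^{-n}I_X\otimes\sum_o|o\rangle\langle o|\otimes D_o$ and $\Tr(W_QW_{Q'})=4^n\delta_{Q,Q'}$, you get $\sum_Q c_Q^2=\Tr[\omega_{XOC}T]\leq(\Tr[T^\dagger\omega_{XOC}T])^{1/2}\leq(2^n\sum_Q c_Q^2)^{1/2}$.

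There is a second, easily fixed error. As written, your hybrid puts $\Lambda_Q$ in the first $\ell$ slots, so the state entering the replaced round has already passed through $\Lambda_Q$ and depends on $Q$. Averaging the per-round bound over $Q$ is then illegitimate: the $Q$-adapted input $\frac{I+Q}{2^n}\otimes\rho_A$ satisfies the reduction criterion yet makes that round contribute $\epsilon'$. You need $\Lambda_0$ first and $\Lambda_Q$ afterwards, as in the paper, so the replaced round sees the $Q$-independent null state and later rounds are absorbed by contractivity.

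With that fix, your input-restricted argument is correct and genuinely different from the paper's. Take $c_Q:=\|X_Q\|_1=\Tr[(Q\otimes U_Q)\rho_{SA}]$ and $T:=\sum_Q c_Q\,Q\otimes U_Q$. Then $\rho_{SA}\leq I_S\otimes\rho_A$ and $\Tr(QQ')=2^n\delta_{Q,Q'}$ (orthogonality, not really the twirl) give $\sum_Q c_Q^2=\Tr[\rho_{SA}T]\leq(\Tr[(I_S\otimes\rho_A)TT^\dagger])^{1/2}=(2^n\sum_Q c_Q^2)^{1/2}$, hence $\mathbb{E}_{Q\neq I}\|X_Q\|_1\leq 2^{-n/2}$. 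This is an elementary replacement for the paper's leftover-hash-lemma route, with a slightly better constant. The paper's route buys a single lemma that serves both models; as shown above, your trick does the same once the right dominating operator on $OC$ is in place.
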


Theorem~\ref{Thm_PauliChannel_QuantumAdaptivity} shows that coherent correlations on the unrestricted side cannot compensate for a reduction criterion restriction on the other side.
Although the resulting quantitative bound differs from the incoherent-adaptive scaling, an exponential obstruction persists in both settings.

Unlike the incoherent adaptive proof, this argument cannot be reduced to a round-by-round second-moment estimate.
We instead combine a hybrid argument with the leftover-hash lemma~\cite{Tomamichel_2011}.

\medskip
\textit{Quantitative relaxation.{\textemdash}}
We next quantify how the lower bounds change when the restricted resources are allowed to violate the reduction criterion.
For a density operator $\rho_{SA}$, the conditional min-entropy is defined as follows.

\begin{definition}[Conditional min-entropy~\cite{Tomamichel_2011}]
For a density matrix $\rho_{SA}$, the conditional min-entropy of the system $S$ given the ancilla $A$ is
\begin{align}
    H_{\min}(S\mid A)_\rho
    :=
    \max_{\sigma_A}
    \sup\left\{
        \kappa\in\mathbb{R}:
        \rho_{SA}\le 2^{-\kappa} I_S\otimes\sigma_A
    \right\},
\end{align}
where the maximization is over density operators $\sigma_A$.
\end{definition}
The reduction criterion implies $H_{\min}(S\mid A)_\rho\ge 0$. More generally, the condition $H_{\min}(S\mid A)_\rho\ge -\eta$ allows controlled violations of the reduction criterion, with larger $\eta$ permitting a broader class of states.

For measurements, we impose the analogous relaxation on normalized POVM elements. 
We say that a POVM $\{E_o\}$ on $SA$ satisfies 
$H_{\min}(S\mid A)_{\{E_o\}}\ge -\eta$ if, for every outcome $o$ with $\Tr E_o>0$, there exists a density operator $\sigma_A^o$ such that
\begin{align}
    \frac{E_o}{\Tr E_o}
    \le
    2^\eta I_S\otimes\sigma_A^o,
    \qquad
    \sum_o \Tr(E_o)\sigma_A^o
    =
    d_S I_A ,
\end{align}
where $d_S=2^n$. 
The first condition applies the same conditional-min-entropy relaxation to each normalized measurement effect, while the second preserves the completeness structure of the POVM. 
When no confusion arises, we use the notation $H_{\min}(S\mid A)$ for both input states and POVMs.

With this relaxation, the same proof yields the following quantitative lower bound for both incoherent and coherent adaptive protocols.

\begin{corollary}
\label{Cor_PauliChannel_Adaptivity}
Assume $0<\epsilon\leq 1/3$ and $\eta \geq 0$.
Consider either the incoherent adaptive model or one of the two one-sided coherent adaptive models introduced above.
Suppose that the restricted input states or normalized measurement effects satisfy
$H_{\min}(S\mid A)\ge -\eta$.
Then Pauli channel learning requires
\begin{align}
N
=
\begin{cases}
\Omega\!\left(\epsilon^{-2}2^{n-\eta}\right),
& \text{incoherent},\\[2mm]
\Omega\!\left(\epsilon^{-1}2^{(n-\eta)/2}\right),
& \text{coherent}.
\end{cases}
\end{align}
\end{corollary}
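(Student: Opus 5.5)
We will rerun the proofs of Theorems~\ref{Thm_PauliChannel_ClassicalAdaptivity} and~\ref{Thm_PauliChannel_QuantumAdaptivity}. At every point where they use the reduction criterion $\rho_{SA}\le I_S\otimes\rho_A$ (or its POVM analogue), we will instead use the relaxed operator inequality $\rho_{SA}\le 2^{\eta} I_S\otimes\sigma_A$, or $E_o/\Tr E_o\le 2^\eta I_S\otimes\sigma^o_A$. We will then track the factor $2^\eta$ through the argument. The hard instance is unchanged: the randomly spiked Pauli channel family $\Lambda_{P}=\mathcal{D}+\epsilon\,(\text{spike in direction }P)$ for uniformly random $P\in\mathcal{P}_n$, tested against the completely depolarizing channel $\mathcal{D}$. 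Any $\epsilon$-accurate learner distinguishes the two hypotheses with constant advantage, so it suffices to upper bound the total variation distance between the two outcome distributions.

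\emph{Incoherent case.} The per-round key quantity is the averaged squared response
\begin{align}
\frac{1}{4^n}\sum_{P}\Tr\!\big[E_o\,(P\otimes I_A)\rho_{SA}(P\otimes I_A)\big]^2 ,
\end{align}
or, more precisely, the second moment of the traceless parts, normalized by $\Tr[E_o(\mathcal{D}\otimes\mathrm{id})(\rho)]$. We expand $E_o$ and $\rho_{SA}$ in the Pauli basis on $S$, $\rho_{SA}=2^{-n}\sum_Q Q\otimes R_Q$. The squared response then reduces to a sum of terms like $\Tr[\,\cdots R_Q\cdots]$ that can be bounded by $\Tr_S$-type quantities. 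The entanglement-free proof uses the bound $\sum_Q \|R_Q\|^2$-like $\lesssim \Tr(\rho_A^2)$-type quantities. From $\rho_{SA}\le 2^\eta I_S\otimes\sigma_A$ we instead get, via operator monotonicity and the Cauchy--Schwarz inequality $\Tr[X\rho X^\dagger\rho]\le 2^\eta\Tr[X(I\otimes\sigma_A)X^\dagger\rho]$, a bound $O(2^{\eta}2^{-n})$ in place of $O(2^{-n})$. Here the maximizing $\sigma_A$ plays the role of $\rho_A$. For the measurement side, the same manipulation applies to $E_o/\Tr E_o$. The completeness condition $\sum_o\Tr(E_o)\sigma^o_A=d_SI_A$ is exactly what is needed when summing over outcomes $o$, replacing the identity $\sum_o\Tr_SE_o=d_SI_A$ used previously. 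Since conditioning on past outcomes leaves each round's input/POVM satisfying the same hypothesis, the round-by-round chi-square/TV accumulation gives $\mathrm{TV}=O(N\epsilon^2 2^{\eta-n})$, hence $N=\Omega(\epsilon^{-2}2^{n-\eta})$.

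\emph{Coherent case.} We follow the hybrid argument: replace the spiked channel by $\mathcal{D}$ one round at a time, and bound the trace-distance change in each round using the leftover-hash lemma. In the input-restricted model, the fresh input is a product with the memory, conditioned on past outcomes. Random Pauli twirling by a random spike direction then acts like a hash on $S$ against side information $A$ plus memory. The leftover-hash lemma gives a per-round distance scaling like $\epsilon\,2^{-(H_{\min}(S|A)+n)/2}$ (roughly $\epsilon\,2^{-n/2}$ when $H_{\min}\ge0$). With $H_{\min}\ge-\eta$, the exponent becomes $(n-\eta)/2$. Summing $N$ hybrids gives $N\epsilon\,2^{-(n-\eta)/2}=\Omega(1)$. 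For the measurement-restricted model, we dualize: each normalized effect's bound $2^\eta I_S\otimes\sigma^o_A$ controls the guessing probability/min-entropy of the Pauli label. The completeness relation again allows summing over outcomes with the same $2^\eta$ loss.

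\textbf{Main obstacle.} We expect the hardest step to be the measurement-side coherent case. We must check that the earlier proof used the reduction criterion only through an inequality of the form $E_o/\Tr E_o\le I\otimes\sigma_A^o$ together with the POVM normalization identity. If the proof instead relied on the specific choice $\sigma^o_A=\Tr_SE_o/\Tr E_o$, the replacement is nontrivial; the imposed completeness condition is precisely designed to restore it, and we would verify that step first. Everything else is multiplicative bookkeeping of $2^\eta$, entering linearly in second-moment bounds and as $2^{\eta/2}$ in the leftover-hash bounds.
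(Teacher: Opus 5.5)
Your plan is the paper's: the supplement proves Corollary~1 directly by carrying the factor $2^\eta$ through the Theorem~1 and~2 arguments, and your coherent sketch matches it. The one substantive ingredient missing is in the incoherent case. As you wrote it, the hard instance $\mathcal{D}+\epsilon\,(\text{spike in direction }P)$ has no random sign, and then your claimed accumulation $\mathrm{TV}=O(N\epsilon^2 2^{\eta-n})$ is false. With the paper's spike size $3\epsilon$, the round-$\ell$ likelihood ratio is $1+3\epsilon G_\ell$ with $G_\ell=m_{\vb{e}}(o_\ell)/Q(o_\ell\mid\vb{o}_{<\ell})$, which is first order in $\epsilon$. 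Already for $N=1$, the unentangled input $|0^n\rangle$ with a computational-basis measurement gives $\mathbb{E}_{\vb{e}\neq0}\mathrm{TV}=\Theta(\epsilon\,2^{-n})$. A chi-square chain rule does not repair this. The per-round second moment $\sum_o m_{\vb{e}}(o)^2/Q(o)$ is $O(2^{\eta-n})$ only on average over $\vb{e}$; in the same example it equals $1$ for every nonzero $Z$-type $\vb{e}$. A multiplicative chain rule does not let you average over $\vb{e}$ round by round. The paper instead shifts $\lambda_{\vb{e}}$ by $3s\epsilon$ with a uniform sign $s$ that is never revealed; the size $3\epsilon$ is what lets an $\epsilon$-accurate learner separate the hypotheses, and it is where $\epsilon\le 1/3$ enters. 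It then writes $\mathrm{TV}(Q,R_{\vb{e}})=\sum_{\vb{o}}Q(\vb{o})\max\bigl\{0,\,1-\mathbb{E}_s\prod_\ell(1+3s\epsilon G_\ell)\bigr\}$. Since $1\pm 3\epsilon G_\ell\ge 0$, AM--GM over $s$ bounds the bracket pointwise by $9\epsilon^2\sum_\ell G_\ell^2$. Because the weights are the $\vb{e}$-independent null probabilities, the $\vb{e}$-average can then be taken inside each round.

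With that in place, the per-round estimate is more elementary than your Pauli-expansion/Cauchy--Schwarz sketch. Since $I_S\otimes\sigma_A$ is invariant under conjugation by $P_{\vb{x}}\otimes I_A$, each $a_{\vb{x}}(o)=\operatorname{Tr}[E_o(P_{\vb{x}}\otimes I_A)\rho_{SA}(P_{\vb{x}}\otimes I_A)]$ is at most $t_o=2^\eta\operatorname{Tr}[E_o(I_S\otimes\sigma_A)]$. Walsh--Hadamard orthogonality then gives $\mathbb{E}_{\vb{e}}\,m_{\vb{e}}(o)^2=4^{-2n}\sum_{\vb{x}}a_{\vb{x}}(o)^2\le 4^{-n}t_o\,Q(o)$, with $\sum_o t_o=2^{n+\eta}$. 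On the measurement side, $t_o=2^\eta\operatorname{Tr}(E_o)\operatorname{Tr}[\sigma_A^o\rho_A]$, and $\sum_o t_o=2^{n+\eta}$ is exactly the completeness condition. So the specific choice $\sigma_A^o=\operatorname{Tr}_S E_o/\operatorname{Tr}E_o$ you worried about is never needed. In the coherent case, the paper hashes the classical Pauli label $\vb{x}$ with the channel output $SA$ as side information, using $H_{\min}(X\mid SA)\ge n+H_{\min}(S\mid A)\ge n-\eta$. The retained memory factors out because the fresh input is product with it. In the measurement-restricted model, the operators $\operatorname{Tr}_{SA}[(E_o\otimes I_C)(P_{\vb{x}}\otimes I_{AC})\rho_{SAC}(P_{\vb{x}}\otimes I_{AC})]$ are dominated by the $\vb{x}$-independent $D_o=2^\eta\operatorname{Tr}(E_o)\operatorname{Tr}_{SA}[(I_S\otimes\sigma_A^o\otimes I_C)\rho_{SAC}]$. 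Their traces sum to $2^{n+\eta}$ by completeness, which gives $H_{\min}(X\mid OC)\ge n-\eta$.
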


Corollary~\ref{Cor_PauliChannel_Adaptivity} shows that the obstruction is quantitative rather than discrete: the lower bounds weaken continuously as stronger violations of the reduction criterion are allowed.

\medskip
\textit{Conjugate-state learning.{\textemdash}}
We next consider conjugate-state learning, where the known advantage arises entirely from joint measurements on $\rho\otimes\rho^*$.
Unlike Pauli-channel learning, this setting isolates the role of entanglement on the measurement side. 

Let $X\ket{j}=\ket{j+1\!\!\!\pmod d}$.
Define $Z\ket{j}=e^{2\pi i j/d}\ket{j}$, $D_{q,p}:=e^{i\pi qp/d}X^qZ^p$ for $q,p\in\mathbb{Z}_d$, and let $y_{q,p}:=\Tr(D_{q,p}\rho)$ for an unknown $d$-dimensional state $\rho$~\cite{King_2024}.
Given copies of $\rho\otimes\rho^*$, the task is to estimate all $\{y_{q,p}\}$ to additive error $\epsilon$ with high success probability.

Conjugate-state learning exhibits an exponential sample-complexity advantage from joint entangled measurements. 
With suitable joint measurements on $\rho\otimes\rho^*$, all displacement amplitudes can be learned using $N=O(\epsilon^{-4}\log d)$ copies, whereas without joint measurements, even with access to both $\rho$ and $\rho^*$, one needs $N=\Omega(d/\epsilon^2)$ copies~\cite{King_2024}. 
Thus, the advantage is generated by measurement-side entanglement.

We consider incoherent adaptive protocols in which each copy of $\rho\otimes\rho^*$ is measured separately, while the joint POVM applied in each round may depend on all previous measurement outcomes.
We impose the same reduction-criterion restriction as before on the normalized POVM elements used in every round.
The following result shows that measurements within this class cannot retain the logarithmic dependence on $d$ achievable with unrestricted joint measurements.
The present proof controls each restricted measurement separately and does not address collective measurements across distinct copies.

\begin{figure}[t]
    \centering
    \includegraphics[width=0.47\textwidth]{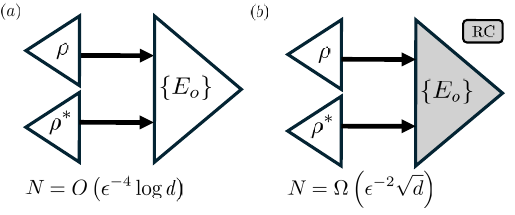}
\caption{
Conjugate-state learning under unrestricted and RC-restricted joint measurements.
(a) With unrestricted measurements~\cite{King_2024}, we can learn all displacement amplitudes from copies of $\rho\otimes\rho^*$ with
$N=O(\epsilon^{-4}\log d)$.
(b) For the copywise adaptive protocols considered here, the RC restriction yields the lower bound $N=\Omega(\epsilon^{-2}\sqrt d)$ for protocols that succeed with probability at least $2/3$.
}
    \label{fig_state_learning}
\end{figure}

\begin{theorem} \label{Thm_ConjugateStateLearning}
Assume $0<\epsilon<1/6$.
For the incoherent adaptive protocols described above, suppose that the normalized elements of every joint POVM satisfy the reduction-criterion restriction.
Then learning all displacement amplitudes of an unknown $d$-dimensional state with success probability at least $2/3$ requires
\begin{align}
    N=\Omega\left(\epsilon^{-2}\sqrt d\right)
\end{align}
copies.
\end{theorem}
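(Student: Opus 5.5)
We will prove the lower bound by reducing learning to a many-versus-one distinguishing problem, in the spirit of the proof of Theorem~\ref{Thm_PauliChannel_ClassicalAdaptivity}. The null hypothesis is the maximally mixed state $\rho_0=I/d$; then $\rho_0^*=\rho_0$ and every displacement amplitude with $(q,p)\neq(0,0)$ vanishes. The alternative is the ensemble of spiked states
\begin{align}
\rho_{a,s}=\frac{1}{d}\left(I+3\epsilon\, s\,\frac{D_a+D_a^\dagger}{2}\right),
\end{align}
where $a=(q,p)\neq(0,0)$ is drawn uniformly and $s\in\{\pm1\}$ is a random sign. For a $\pm$-symmetric version one can use $\tfrac{1}{d}\bigl(I+\tfrac{3\epsilon}{2}(e^{i\phi}D_a+e^{-i\phi}D_a^\dagger)\bigr)$ with a random phase $\phi$.

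The choice $3\epsilon$ with $\epsilon<1/6$ gives two things. First, $\rho_{a,s}\ge0$, because $\|D_a\|=1$ and $3\epsilon<1$. Second, $|y_a|\ge 3\epsilon/2>\epsilon$ up to the constant bookkeeping, which may require adjusting the constant. Hence any learner that succeeds with probability $2/3$ can tell $\rho_0$ from a random $\rho_{a,s}$ with constant advantage. It therefore suffices to show that the total variation distance between the outcome distributions is $o(1)$ unless $N=\Omega(\epsilon^{-2}\sqrt d)$.

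Next we write one round. With $\rho=\rho_0+\Delta$, we have $\rho\otimes\rho^*=\rho_0\otimes\rho_0+\Delta\otimes\rho_0+\rho_0\otimes\Delta^*+\Delta\otimes\Delta^*$. For each normalized effect $F=E_o/\Tr E_o$, the likelihood ratio relative to the null is
\begin{align}
1+d\,\Tr\!\big[F(\Delta\otimes I+I\otimes\Delta^*)\big]+d^2\Tr\!\big[F(\Delta\otimes\Delta^*)\big].
\end{align}
The first-order terms involve only the marginals $F_S$ and $F_A$. The bound we need there, $\mathbb{E}_a|\Tr(F_S D_a)|^2\le 1/d$, is simply Parseval for unit-trace $F_S$. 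The second-order term $\Tr[F(D_a\otimes D_a^*)]$ is where entanglement helps: for the maximally entangled projector it equals $1$ for every $a$, and this is exactly the King et al.\ advantage.

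The key step is that the reduction criterion kills this. From $F\le I\otimes\sigma_A$ with $\sigma_A=\Tr_S F$, we get $\sum_a|\Tr[F(D_a\otimes D_a^*)]|^2\le d^2\Tr(F^2)\le d^2\Tr[F(I\otimes\sigma_A)]=d^2\Tr\sigma_A^2\le d^2$, using the displacement orthogonality basis on $SA$. Averaging over the $d^2-1$ directions then gives $\mathbb{E}_a|\Tr[F(D_a\otimes D_a^*)]|^2=O(1)$, i.e.\ $O(1)$ rather than the value $1$ at every $a$. That alone is not enough.

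I would instead bound the second moment of the full per-round likelihood-ratio deviation, averaged over $a$ and the sign or phase. The random sign cancels the first-order cross terms in the relevant mixture, so what must be controlled is $\mathbb{E}\,\delta_o(a)^2$, weighted by $\Tr E_o/d^2$. The second-order term contributes $9\epsilon^4$ times an average that is $O(1)$. This is small relative to $\epsilon^2$, but the total still needs to scale as $\epsilon^2/\sqrt d$ per round. That requires a more careful combination, applying Cauchy--Schwarz between the first- and second-order contributions with the $1/d$ Parseval bound.

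To aggregate over rounds, we follow the incoherent-adaptive argument of Theorem~\ref{Thm_PauliChannel_ClassicalAdaptivity}. We condition on previous outcomes, use the fact that the per-round bound holds uniformly for every RC POVM, and control the total variation distance through a chi-square or second-moment bound of the form $\mathrm{TV}\le O(N\epsilon^2/\sqrt d)$. Forcing this to be $\Omega(1)$ gives the claim.

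The main obstacle is obtaining exactly a $1/\sqrt d$ per-round deviation. I expect it to come from the square root of the $O(1/d)$ average squared response, taking the expectation of $|\delta|$ over $a$ rather than its square, combined with Le Cam's inequality applied round-by-round to the mixture. Making this rigorous under adaptivity, since the mixture over $a$ does not factor across rounds, is the delicate point. I would first try a martingale or "one-sided likelihood ratio" argument, bounding $\mathbb{E}_a\prod_\ell(1+\delta_\ell)\ge 1-\sum_\ell\mathbb{E}|\delta_\ell|$.
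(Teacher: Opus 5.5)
\textbf{There is a genuine gap, and it is at the step you call the key one.} Write the per-round likelihood ratio as $1+r a_m+b_m$. Here $a_m\propto\epsilon\,G$ is the first-order marginal response, and $b_m\propto\epsilon^2 H$ is the sign-independent term coming from $\Delta\otimes\Delta^*$.

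The sign $r$ is shared across rounds, so it cannot be averaged round by round. What works is pairing $r=\pm1$ by AM--GM, which gives $[1-\mathds{E}_r\prod_m(1+ra_m+b_m)]_+\le\sum_m(a_m^2+2|b_m|)$. This bound holds pointwise on every outcome path and is linear in the per-round quantities. That linearity is what lets the average over the unrevealed $z$ pass inside despite adaptivity.

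Your two alternatives fail here:
\begin{itemize}
\item A chi-square bound is multiplicative across rounds. It therefore cannot absorb per-round estimates that hold only on average over $z$.
\item Your fallback $\prod_\ell(1+\delta_\ell)\ge1-\sum_\ell|\delta_\ell|$, used without the sign pairing, gives $\epsilon/\sqrt d$ per round from the first-order term. That is the wrong $\epsilon$-dependence.
\end{itemize}

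In the paired bound the second-order term enters \emph{linearly}. So you need $\mathds{E}_z|\Tr[F(D_z\otimes D_z^*)]|=O(d^{-1/2})$, i.e.\ $\sum_z|\Tr[F(D_z\otimes D_z^*)]|^2=O(d)$. Your estimate $\sum_z|\Tr[F(D_z\otimes D_z^*)]|^2\le d^2\Tr F^2\le d^2$ is off by a factor of $d$, and neither of your suggested repairs closes it:
\begin{itemize}
\item Tightening $\Tr F^2$ cannot help, because pure product effects satisfy the reduction criterion with $\Tr F^2=1$.
\item Cauchy--Schwarz against the first-order Parseval bound cannot help either. For $F=\ketbra{\Psi_{0,0}}{\Psi_{0,0}}$, which violates the criterion, both marginals are $I/d$. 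Every first-order response vanishes, yet $\Tr[F(D_z\otimes D_z^*)]=1$ for all $z$. The reduction criterion must act on the second-order term itself.
\end{itemize}

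\textbf{The missing idea} is that the $d^2$ operators $D_z\otimes D_z^*$ commute and are simultaneously diagonal in the maximally entangled basis $\ket{\Psi_{a,b}}=d^{-1/2}\sum_j\omega^{bj}\ket{j+a}\ket{j}$, with eigenvalues $\omega^{ap-bq}$ for $z=(q,p)$. The argument then runs as follows.
\begin{itemize}
\item $\Tr[F(D_{q,p}\otimes D_{q,p}^*)]=\sum_{a,b}\omega^{ap-bq}\beta_{a,b}$ is the discrete Fourier transform of the Bell-diagonal weights $\beta_{a,b}=\bra{\Psi_{a,b}}F\ket{\Psi_{a,b}}$, which sum to $1$.
\item The reduction criterion $F\le I\otimes\sigma_A$ gives $\beta_{a,b}\le\bra{\Psi_{a,b}}(I\otimes\sigma_A)\ket{\Psi_{a,b}}=1/d$. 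Hence $\sum_{a,b}\beta_{a,b}^2\le 1/d$.
\item Plancherel gives $\sum_z|\Tr[F(D_z\otimes D_z^*)]|^2=d^2\sum_{a,b}\beta_{a,b}^2\le d$.
\item Cauchy--Schwarz over the $d^2-1$ nonzero $z$ then gives the needed $O(d^{-1/2})$ average.
\end{itemize}

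Your perturbation $\Delta\propto D_a+D_a^\dagger$ also produces terms $D_z\otimes D_z^T$. These are diagonalized by the second Bell basis $\ket{\Phi_{a,b}}=d^{-1/2}\sum_j\omega^{bj}\ket{j+a}\ket{-j}$, and the same argument applies there.

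Combined with your $O(1/d)$ Parseval bound on the first-order term, this yields total variation $O(N\epsilon^2/\sqrt d)$ and hence $N=\Omega(\epsilon^{-2}\sqrt d)$. Separately, your amplitude $3\epsilon$ must be enlarged so that $|y_a|>2\epsilon$ under $H_1$; this is only bookkeeping.
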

Within the copywise adaptive protocols considered here, the reduction criterion is obeyed by all bound-entangled normalized effects, so Theorem~\ref{Thm_ConjugateStateLearning} rules out the logarithmic advantage for bound-entangled joint measurements.

As in the Pauli-channel setting, the measurement restriction can be relaxed quantitatively using conditional min-entropy, yielding a continuous family of lower bounds.

\begin{corollary} \label{Cor_ConjugateStateLearning}
Assume $0<\epsilon<1/6$ and $\eta \geq0$.
Suppose that every round of a copywise adaptive protocol described above uses a POVM satisfying $H_{\min}(S\mid A)\geq-\eta$ in the sense defined above.
Then any such protocol that succeeds with probability at least $2/3$ requires
\begin{align}
    N
    =
    \Omega\!\left(\epsilon^{-2}\sqrt{d\,2^{-\eta}}\right)
\end{align}
copies.
\end{corollary}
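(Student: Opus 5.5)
We reduce to a many-versus-one distinguishing problem in the style of the lower bound of King et al.~\cite{King_2024}, and control each restricted joint measurement with the relaxed reduction criterion, as in the proof of Theorem~\ref{Thm_ConjugateStateLearning}.

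\textbf{Hard instance.} The null hypothesis is $\rho_0=I/d$. The alternatives are $\rho_{q,p,s}=\frac{1}{d}\bigl(I+3\epsilon\,(s D_{q,p}+\bar s D_{q,p}^\dagger)/2\bigr)$, with $(q,p)\neq(0,0)$ and a phase $s$ drawn uniformly. After restricting to a suitable subset, these are valid states of the correct normalization for $\epsilon<1/6$. Each alternative has some $|y_{q,p}|\gtrsim 3\epsilon$, while every $y_{q,p}$ with $(q,p)\neq(0,0)$ vanishes for $\rho_0$. A learner succeeding with probability $2/3$ therefore distinguishes $\rho_0^{\otimes}$ from the uniform mixture over $(q,p,s)$ with constant advantage.

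\textbf{Likelihood ratio.} A round's outcome $o$ has probability $\Tr(E_o\,\rho\otimes\rho^*)$. Writing $\rho=\frac{1}{d}(I+\Delta)$, the ratio relative to the null is
\[
1+\frac{1}{d}\frac{\Tr(E_o(\Delta\otimes I+I\otimes\Delta^*))}{\Tr E_o}+\frac{1}{d}\frac{\Tr(E_o\,\Delta\otimes\Delta^*)}{\Tr E_o},
\]
using $\Tr(E_o)/d^2$ as the null probability. We follow the standard tree/martingale argument for adaptive protocols (Chen--Cotler--Huang--Li): it suffices to bound, for every round and every normalized effect $\hat E=E_o/\Tr E_o$, the average over $(q,p)$ of the squared first-order response and the size of the second-order term. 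We expect the relevant quantity per round to be $\mathbb{E}_{q,p}\,\bigl|\Tr(\hat E\,D_{q,p}\otimes D_{q,p}^*)\bigr|$, with an analogous bound for the one-sided terms.

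\textbf{Key estimate.} This is where the min-entropy enters. By hypothesis $\hat E\le 2^\eta I_S\otimes\sigma_A$. Mimicking the Pauli-channel response bound, we will show
\[
\sum_{q,p}\bigl|\Tr(\hat E\,D_{q,p}\otimes D^*_{q,p})\bigr|^2\le d^2\,\Tr\bigl(\hat E^2\bigr)\cdot(\text{const})\le \text{const}\cdot 2^{\eta} d .
\]
The first inequality comes from the Weyl operators forming an orthogonal basis of the $S$ factor after partial transposition on $A$; more precisely, $\{D_{q,p}\otimes D^*_{q,p}\}$ are the stabilizers of the maximally entangled state. The second uses $\Tr\hat E^2\le 2^\eta\Tr(\hat E\, I\otimes\sigma_A)\le 2^\eta$.

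Averaging over $d^2$ directions gives a mean square of $O(2^\eta/d)$. By Cauchy--Schwarz, the mean absolute value is $O(\sqrt{2^\eta/d})$. Each round therefore contributes $O(\epsilon^2\sqrt{2^{\eta}/d})$ to the KL or $\chi^2$-type divergence, with the $s$-averaging killing the linear terms in expectation. Summing over $N$ rounds and requiring constant total variation gives $N=\Omega(\epsilon^{-2}\sqrt{d\,2^{-\eta}})$.

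\textbf{Completeness condition.} The condition $\sum_o\Tr(E_o)\sigma^o_A=d I_A$ is used to keep the averaging over outcomes weighted consistently. It lets us bound $\sum_o \Tr(E_o)\,\Tr(\hat E_o\, I\otimes\sigma^o_A\cdots)$ by a trace of $I$, which we need when the estimate is averaged over outcomes rather than applied pointwise.

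\textbf{Main obstacle.} The delicate step is the square-root loss. Linear terms cancel on averaging over $s$, but the second-order term $\Tr(\hat E\,\Delta\otimes\Delta^*)$ does not. We must show that the adaptive martingale argument yields a per-round cost linear in $\mathbb{E}_{q,p}|\cdot|$ rather than its square; the square would incorrectly give $d/2^\eta$. We would first try a one-sided likelihood-ratio bound $L\ge 1-\delta$ pointwise on outcomes, as in Chen et al., combined with the averaged estimate above.
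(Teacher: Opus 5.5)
Your architecture matches the paper's: a random direction revealed only after the interaction, an unrevealed sign, a linear and a sign-even quadratic term in each round's likelihood ratio, and Bell-basis diagonalization with Parseval and Cauchy--Schwarz for the quadratic term. However, your key estimate fails as written. From $\sum_{q,p}|\Tr(\hat E\,D_{q,p}\otimes D^*_{q,p})|^2\le d^2\Tr(\hat E^2)$ and $\Tr(\hat E^2)\le 2^\eta$ you only get $2^\eta d^2$, not $2^\eta d$. The mean absolute response is then $O(2^{\eta/2})$, and no $d$-dependence survives. No sharper purity bound can rescue this: $\hat E=|00\rangle\langle 00|$ satisfies the restriction with $\eta=0$ and $\sigma_A=|0\rangle\langle 0|$, yet $\Tr(\hat E^2)=1$.

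The missing idea is to apply the operator inequality to the Bell-diagonal entries themselves. Parseval gives the identity $\sum_{q,p}|\Tr(\hat E\,D_{q,p}\otimes D^*_{q,p})|^2=d^2\sum_{a,b}\beta_{a,b}^2$ with $\beta_{a,b}=\langle\Psi_{a,b}|\hat E|\Psi_{a,b}\rangle$. Since every Bell state has maximally mixed marginal on $A$, $\beta_{a,b}\le 2^\eta\langle\Psi_{a,b}|I_S\otimes\sigma_A|\Psi_{a,b}\rangle=2^\eta/d$. Together with $\sum_{a,b}\beta_{a,b}=1$ this gives $\sum_{a,b}\beta_{a,b}^2\le 2^\eta/d$, hence $2^\eta d$. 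For your instance, $\Delta\otimes\Delta^*$ also contains $D\otimes D^T$ terms (with phase $s^2$); these are diagonal in the other basis $|\Phi_{a,b}\rangle$ and need the same treatment. Only the effect-wise inequality enters here. The completeness condition is not needed, because the bound holds pointwise in each outcome and the outcome weights sum to one.

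Second, the step you flag as the main obstacle is not actually carried out. A per-round KL or $\chi^2$ accounting does not directly apply to the sign-mixed alternative, because its conditional distributions involve the posterior over the sign. The paper closes this with an elementary pointwise inequality: if $1+b_m\pm a_m\ge 0$ for all $m$, then $\bigl[1-\mathbb{E}_{r=\pm1}\prod_m(1+ra_m+b_m)\bigr]_+\le\sum_m(a_m^2+2|b_m|)$. This follows from $\tfrac12(A_++A_-)\ge\sqrt{A_+A_-}=\prod_m s_m$, with $A_\pm=\prod_m(1\pm a_m+b_m)$ and $s_m=\sqrt{(1+b_m)^2-a_m^2}$, and then $[1-s_m]_+\le 1-s_m^2\le a_m^2+2|b_m|$ when $s_m<1$. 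Multiplying by $Q(\mathbf{o})$ and summing over outcome sequences bounds the total variation distance by a constant times $\epsilon^2\sum_m\mathbb{E}_Q\bigl[G_m^2+2|H_m|\bigr]$, where $G_m$ and $H_m$ are the normalized first- and second-order responses of round $m$. So the quadratic term enters linearly in $|H_m|$, and the averaged bounds $O(1/d)$ and $O(\sqrt{2^\eta/d})$ plug in directly. With your continuous phase, you can reuse this argument by pairing $s$ with $-s$ (linear part odd, quadratic part even).

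Finally, fix your constants. With $\Delta=\tfrac{3\epsilon}{2}(sD+\bar sD^\dagger)$ one generically gets $|y_{q,p}|=3\epsilon/2$, which an $\epsilon$-accurate estimate cannot separate from $0$; drop the $1/2$ (validity still holds since $6\epsilon<1$). Your likelihood ratio also should not carry the $1/d$ prefactors.
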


At $\eta=0$, Corollary~\ref{Cor_ConjugateStateLearning} recovers the lower bound of Theorem~\ref{Thm_ConjugateStateLearning}.
More generally, the lower bound decreases as stronger violations of the reduction criterion are allowed. 
Within the copywise adaptive model, bound-entangled joint measurements cannot retain the logarithmic-sample advantage of unrestricted joint measurements.

\medskip
\textit{Conclusion.{\textemdash}}
We have identified violation of the reduction criterion as a necessary resource property for the exponential learning advantages considered here.
In Pauli-channel learning, imposing the criterion on either the input or measurement side rules out an exponential improvement, even though the unrestricted side may retain arbitrary system--ancilla entanglement.
An exponential obstruction persists under the one-sided coherent adaptive models.
In conjugate-state learning, the same restriction on joint measurements prevents the logarithmic-sample advantage of unrestricted entangled measurements.
The obstruction is therefore not specific to a single learning task or to a particular placement of entanglement.

These results sharpen the distinction between the amount and the operational structure of entanglement.
Together with the observation that asymptotically small entanglement can suffice for exponential advantage, they show that entanglement alone does not determine the usefulness of a learning resource.
Our conditional-min-entropy bounds further demonstrate that this distinction is quantitative: the obstruction weakens continuously as larger violations of the reduction criterion are permitted.

Whether the square-root gap between the coherent and incoherent lower bounds is fundamental remains open.
We suspect that it may instead arise from repeated applications of data-processing inequalities in the coherent analysis.

Recent axiomatic work connecting the reduction criterion to nonnegative quantum conditional entropies~\cite{Gour_2024} places this boundary in a broader information-theoretic context.
A natural next question is to identify resource properties that are not only necessary but also sufficient for exponential quantum learning advantage, and to determine how broadly similar entropy-based limitations extend to other learning tasks.




\begin{acknowledgments} 
This work was supported by the National Research Foundation of Korea Grants (No. RS-2024-00431768 and No. RS-2025-00515456) funded by the Korean government (Ministry of Science and ICT (MSIT)) and the Institute of Information \& Communications Technology Planning \& Evaluation (IITP) Grants funded by the Korean government (MSIT) (No. RS-2024-00437284, No. IITP-2025-RS-2025-02283189 and No. IITP-2025-RS-2025-02263264) by Global Partnership Program of Leading Universities in Quantum Science and Technology (RS-2025-08542968) through the National Research Foundation of Korea~(NRF) funded by the Korean government (Ministry of Science and ICT (MSIT)).
\end{acknowledgments}

\nocite{lecam1973convergence, asadian2016heisenberg, Terras1999}

\bibliography{reference}

\end{document}


\definecolor{navy}{RGB}{46,72,102}
\definecolor{pink}{RGB}{219,48,122}
\definecolor{grey}{RGB}{184,184,184}
\definecolor{yellow}{RGB}{255,192,0}
\definecolor{grey1}{RGB}{217,217,217}
\definecolor{grey2}{RGB}{166,166,166}
\definecolor{grey3}{RGB}{89,89,89}
\definecolor{red}{RGB}{255,0,0}

\title{Supplemental Material for ``Bound Entanglement Is Insufficient for an Exponential Quantum Learning Advantage''}

\author{Hyeongu Kang}
\author{Sangwoo Jeon}

\author{Changhun Oh}
\email{changhun0218@gmail.com}
\affiliation{Department of Physics, Korea Advanced Institute of Science and Technology, Daejeon 34141, Korea}

\maketitle
\tableofcontents

\section{Bounds for incoherent adaptive Pauli-channel learning} \label{SM_Section_1}

In this section, we prove Theorem~1 and the incoherent-adaptive part of Corollary~1 of the main text.
We first recall the precise statements proved in this section.
\medskip

\noindent\textbf{Theorem 1 (main text).}
Assume $0<\epsilon\leq1/3$.
Consider an incoherent adaptive protocol.
If, in every channel use, either the system--ancilla input state or the measurement POVM satisfies the reduction-criterion restriction, the optimal sample complexity for Pauli-channel learning is
\begin{align}
N
=
\Theta\left(
\epsilon^{-2}2^n
\right).
\end{align}

\medskip
\noindent\textbf{Corollary 1 (incoherent-adaptive part; main text).}
Assume $0<\epsilon\leq1/3$.
Suppose that, in every channel use of an incoherent adaptive protocol, the restricted input states or normalized measurement effects satisfy
$H_{\min}(S\mid A)\geq-\eta$ for some fixed $\eta \geq0$.
Then Pauli-channel learning requires
\begin{align}
N
=
\Omega\left(
\epsilon^{-2}2^{n-\eta}
\right).
\label{Eq_SM_Incoherent_Target}
\end{align}

We prove the quantitative lower bound in Eq.~\eqref{Eq_SM_Incoherent_Target}.
Setting $\eta=0$ gives the lower bound in Theorem~1, because the conditional-min-entropy restriction at $\eta=0$ contains all resources satisfying the reduction criterion.
Combining this lower bound with the known entanglement-free upper bound~\cite{PhysRevLett.132.180805,PRXQuantum.6.020323} yields the tight scaling in Theorem~1.

We next recall the protocol model.
An incoherent adaptive protocol uses the unknown $n$-qubit Pauli channel once in each of $N$ rounds.
In round $\ell$, conditioned on the previous measurement outcomes $\vb{o}_{<\ell}=(o_1,\ldots,o_{\ell-1})$, the learner prepares an input state $\rho_{SA}^{\vb{o}_{<\ell}}$, applies the unknown channel to the system register $S$, and measures the joint system--ancilla output using a POVM
$\{E_{o_\ell}^{\vb{o}_{<\ell}}\}_{o_\ell}$.
The input state and POVM may depend arbitrarily on $\vb{o}_{<\ell}$.
However, no quantum memory is retained across distinct channel uses, so the adaptation between rounds is mediated only by the classical measurement outcomes.
The system register has dimension $d_S=2^n$, while the ancilla register $A$ may have arbitrary dimension.

We consider two one-sided restrictions.
In the input-restricted model, every state used by the learner satisfies
\begin{align}
H_{\min}(S\mid A)_{\rho_{SA}^{\vb{o}_{<\ell}}}\geq-\eta,
\end{align}
while the POVMs are unrestricted.
In the measurement-restricted model, the input states are unrestricted, while every POVM satisfies the corresponding conditional-min-entropy restriction defined in the main text.

The proof proceeds in three steps.
First, we reduce Pauli-channel learning to a partially revealed hypothesis-testing problem.
The alternative channel differs from the completely depolarizing channel in a uniformly random nonidentity Pauli direction.
This direction is revealed only after all channel uses have been completed.
Consequently, any learner that estimates all Pauli eigenvalues simultaneously must also solve the resulting testing problem.

Second, we average the alternative hypothesis over an unknown random sign.
This sign averaging cancels the contribution that is linear in the perturbation strength.
The total variation distance (TVD) between the null and alternative outcome distributions can then be bounded by a sum of roundwise squared response functions.

Finally, we use the one-sided conditional-min-entropy restriction.
For either a restricted input state or a restricted POVM, Walsh--Hadamard orthogonality bounds the average squared response to a random Pauli direction by
$O(2^{\eta-n})$ in every round, even after conditioning on all previous outcomes.
Summing this bound over the $N$ adaptive rounds gives
\begin{align}
\mathds{E}_{\vb{e}\neq0}
\TVD(Q,R_{\vb{e}})
=
O\left(
N\epsilon^2 2^{\eta-n}
\right).
\end{align}
A successful testing strategy requires this total variation distance to be bounded below by a constant, which yields Eq.~\eqref{Eq_SM_Incoherent_Target}.

\subsection{Pauli-channel preliminaries}

We use a binary labeling of Pauli operators that makes the Fourier structure of Pauli channels explicit.
For $\vb{x}=(\vb{u},\vb{v})\in\mathbb{Z}_2^{2n}$, where
$\vb{u},\vb{v}\in\mathbb{Z}_2^n$, define
\begin{align}
P_{\vb{x}}
:=
\bigotimes_{j=1}^n
i^{u_jv_j}X^{u_j}Z^{v_j}.
\end{align}
This convention gives the usual Hermitian Pauli operators and places them in one-to-one correspondence with the binary strings
$\vb{x}\in\mathbb{Z}_2^{2n}$.

For
$\vb{x}=(\vb{u},\vb{v})$ and
$\vb{b}=(\vb{u}',\vb{v}')$, define the symplectic inner product
\begin{align}
\langle\vb{x},\vb{b}\rangle
:=
\vb{u}\cdot\vb{v}'
+
\vb{v}\cdot\vb{u}'
\pmod 2.
\end{align}
It records the commutation relation between the corresponding Pauli operators:
\begin{align}
P_{\vb{x}}P_{\vb{b}}P_{\vb{x}}
=
(-1)^{\langle\vb{x},\vb{b}\rangle}
P_{\vb{b}}.
\end{align}

An $n$-qubit Pauli channel is a probabilistic mixture of Pauli conjugations,
\begin{align}
\Lambda(\rho)
=
\sum_{\vb{x}\in\mathbb{Z}_2^{2n}}
p_{\vb{x}}P_{\vb{x}}\rho P_{\vb{x}},
\end{align}
where $p_{\vb{x}}\geq0$ and
$\sum_{\vb{x}}p_{\vb{x}}=1$.
The coefficients $p_{\vb{x}}$ are the Pauli error probabilities.
The preceding conjugation relation shows that every Pauli operator is an eigenoperator of the channel:
\begin{align}
\Lambda(P_{\vb{b}})
=
\lambda_{\vb{b}}P_{\vb{b}}.
\end{align}
Its eigenvalue is
\begin{align}
\lambda_{\vb{b}}
=
\sum_{\vb{x}\in\mathbb{Z}_2^{2n}}
p_{\vb{x}}
(-1)^{\langle\vb{x},\vb{b}\rangle}.
\end{align}
Conversely,
\begin{align}
p_{\vb{x}}
=
\frac{1}{4^n}
\sum_{\vb{b}\in\mathbb{Z}_2^{2n}}
\lambda_{\vb{b}}
(-1)^{\langle\vb{x},\vb{b}\rangle}.
\end{align}
Thus, the Pauli error probabilities and Pauli eigenvalues are equivalent descriptions of the channel, related by the symplectic Walsh--Hadamard transform.
The eigenvalue $\lambda_{\vb{b}}$ corresponds to the notation $\lambda_P$ used in the main text.

In particular, the completely depolarizing channel has the uniform error distribution
$p_{\vb{x}}=4^{-n}$ for all $\vb{x}\in \mathbb{Z}_{2}^{2n}$ and eigenvalues
$\lambda_{\vb{b}}=\delta_{\vb{b},0}$.
It will serve as the null hypothesis in the testing reduction below.

\subsection{Hypothesis-testing reduction}

We now construct the testing problem used in the lower bound.
The null hypothesis is the completely depolarizing channel.
Under the alternative hypothesis, a single nontrivial Pauli eigenvalue is shifted by magnitude $3\epsilon$ in a uniformly random direction $\vb{e}$ and with a uniformly random sign $s$.
The random direction prevents the learner from tailoring its measurements to a predetermined Pauli observable, while revealing $\vb{e}$ only after the interaction converts the simultaneous-estimation requirement of Pauli-channel learning into a testing requirement.
The unrevealed sign will be averaged out below; this removes the first-order response to the perturbation and leads to the quadratic dependence on $\epsilon$.

Assume throughout that $0<\epsilon\leq1/3$.
For $\vb{e}\in\mathbb{Z}_2^{2n}\setminus\{0^{2n}\}$ and
$s\in\{\pm1\}$, define
\begin{align}
H_0:\quad
&\Lambda_{\mathrm{dep}},
\qquad
\lambda_{\vb{b}}
=
\delta_{\vb{b},0},
\\
H_1:\quad
&\Lambda_{\vb{e},s},
\qquad
\lambda_{\vb{b}}
=
\delta_{\vb{b},0}
+
3s\epsilon\,\delta_{\vb{b},\vb{e}}.
\label{Eq_SM_Hypothesis}
\end{align}

The corresponding partially revealed testing game is defined as follows.

\begin{enumerate}
    \item The referee samples a nonzero bit string
    $\vb{e}\in\mathbb{Z}_2^{2n}\setminus\{0^{2n}\}$ and a sign
    $s\in\{-1,1\}$ uniformly at random.

        \item The referee selects $H_0$ or $H_1$ from Eq.~\eqref{Eq_SM_Hypothesis} with equal probability and gives the player $N$ uses of the chosen channel.

    \item The player performs an incoherent adaptive protocol.
    In each channel use, the input state and POVM may depend on the previous measurement outcomes.
    We consider two one-sided protocol classes: (i) every input state satisfies
    $H_{\min}(S\mid A)\geq-\eta$, while the POVMs are unrestricted; or
    (ii) every POVM satisfies the corresponding conditional-min-entropy restriction, while the input states are unrestricted.
    All channel uses are completed before the referee reveals $\vb{e}$.

    \item Finally, the referee reveals the bit string $\vb{e}$, but not the
    sign $s$, and asks the player to decide whether the selected channel was
    $\Lambda_{\mathrm{dep}}$.
\end{enumerate}

Any Pauli-channel learner induces a strategy for this hypothesis-testing game.
After $\vb{e}$ is revealed, the player uses the learner's estimate $\hat{\lambda}_{\vb{e}}$.
Under the learner's estimation guarantee, $H_0$ implies $|\hat{\lambda}_{\vb{e}}|\leq\epsilon$, whereas $H_1$ implies $|\hat{\lambda}_{\vb{e}}|\geq2\epsilon$.
Therefore, thresholding $|\hat{\lambda}_{\vb{e}}|$ at $3\epsilon/2$ distinguishes the two hypotheses.
A learner with success probability at least $2/3$ thus yields a testing strategy with the same success probability.
Consequently, any lower bound for the testing game also applies to the learning task.

Let $Q(\vb{o}_{1:N})$ denote the distribution of the full outcome $\vb{o}_{1:N}=(o_1,\ldots,o_N)$ under $H_0$, and let $R_{\vb{e},s}(\vb{o}_{1:N})$ denote the corresponding distribution under $H_1$ with spike location $\vb{e}$ and sign $s$.
Since the sign $s$ is not revealed to the player, the relevant alternative distribution for a fixed $\vb{e}$ is the sign-averaged distribution
\begin{align}
    R_{\vb{e}}(\vb{o}_{1:N})
    :=
    \mathds{E}_{s}R_{\vb{e},s}(\vb{o}_{1:N}).
\end{align}

Le Cam's two-point method~\cite{lecam1973convergence} bounds the player's success probability in terms of the TVD between the null distribution and the sign-averaged alternative distribution.
Averaging this bound over the revealed nonzero index $\vb{e}$ gives
\begin{align}
\Pr (\text{win}) &\leq \frac{1}{2} \left( 1+
        \mathds{E}_{\vb{e}\neq 0}
        \TVD\!\left(
            Q(\vb{o}_{1:N}),
            R_{\vb{e}}(\vb{o}_{1:N})
        \right)\right).
\end{align}
Here, TVD is defined as
\begin{align}
    \TVD(P,Q)
    :=
    \sum_{\omega}\max\{0,P(\omega)-Q(\omega)\}
    =
    \frac{1}{2}\sum_{\omega}|P(\omega)-Q(\omega)|.
    \label{Eq_SM_TVD_Def}
\end{align}
Combining this with $\Pr(\mathrm{win})\ge 2/3$ gives
\begin{align}
    \frac{1}{3}
    \leq
    \mathds{E}_{\vb{e}\neq 0}
    \TVD\!\left(
        Q(\vb{o}_{1:N}),
        R_{\vb{e}}(\vb{o}_{1:N})
    \right).
    \label{Eq_SM_TVD_Lower}
\end{align}

Applying the inverse Walsh--Hadamard transform to the Pauli eigenvalues in Eq.~\eqref{Eq_SM_Hypothesis} gives
\begin{align}
    p_{\mathrm{dep}}(\vb{x})
    =
    \frac{1}{4^n},
    \qquad
    p_{\vb{e},s}(\vb{x})
    =
    \frac{1}{4^n}
    \left[
        1+
        3s\epsilon
        (-1)^{\langle\vb{x},\vb{e}\rangle}
    \right].
\end{align}
Because $\vb{e}\neq0$, the values $+1$ and $-1$ occur equally often in $(-1)^{\langle\vb{x},\vb{e}\rangle}$ as $\vb{x}$ ranges over $\mathbb{Z}_2^{2n}$; hence, $\sum_{\vb{x}\in\mathbb{Z}_2^{2n}}p_{\vb{e},s}(\vb{x})=1$.
The assumption $0<\epsilon\leq1/3$ also ensures that $p_{\vb{e},s}(\vb{x})\geq0$ for every $\vb{x}$.
Therefore, $\Lambda_{\vb{e},s}$ is a valid completely positive and trace-preserving (CPTP) Pauli channel.
Explicitly,
\begin{align}
    \Lambda_{\vb{e},s}(\rho)
    =
    \sum_{\vb{x}\in\mathbb{Z}_2^{2n}}
    \frac{
        1+
        3s\epsilon
        (-1)^{\langle\vb{x},\vb{e}\rangle}
    }{4^n}
    P_{\vb{x}}\rho P_{\vb{x}}.
\end{align}

For each round $\ell$, let $R_{\vb{e},s}(o_{\ell}\mid\vb{o}_{<\ell})$ and $Q(o_{\ell}\mid\vb{o}_{<\ell})$ denote the conditional probabilities of obtaining outcome $o_{\ell}$ under $H_1$ and $H_0$, respectively, given the previous measurement outcomes $\vb{o}_{<\ell}=(o_1,\ldots,o_{\ell-1})$.
For the input state $\rho_{SA}^{\vb{o}_{<\ell}}$ and POVM $\{E_{o_{\ell}}^{\vb{o}_{<\ell}}\}_{o_{\ell}}$ selected according to these previous measurement outcomes, the conditional probabilities are
\begin{align}
    R_{\vb{e},s}(o_{\ell}\mid\vb{o}_{<\ell})
    &=
    \Tr\left[
        E_{o_{\ell}}^{\vb{o}_{<\ell}}
        \left(
            \Lambda_{\vb{e},s}\otimes\id_A
        \right)
        \left(
            \rho_{SA}^{\vb{o}_{<\ell}}
        \right)
    \right], \qquad
    Q(o_{\ell}\mid\vb{o}_{<\ell})
    =
    \Tr\left[
        E_{o_{\ell}}^{\vb{o}_{<\ell}}
        \left(
            \Lambda_{\mathrm{dep}}\otimes\id_A
        \right)
        \left(
            \rho_{SA}^{\vb{o}_{<\ell}}
        \right)
    \right].
\end{align}
Using the explicit form of $\Lambda_{\vb{e},s}$, we write
\begin{align}
    m_{\vb{e}}^{\vb{o}_{<\ell}}(o_{\ell})
    &:=
    \sum_{\vb{x}\in\mathbb{Z}_2^{2n}}
    \frac{(-1)^{\langle\vb{x},\vb{e}\rangle}}{4^n}
    \Tr\left[
        E_{o_{\ell}}^{\vb{o}_{<\ell}}
        \left(P_{\vb{x}}\otimes I_A\right)
        \rho_{SA}^{\vb{o}_{<\ell}}
        \left(P_{\vb{x}}\otimes I_A\right)
    \right],
    \\
    R_{\vb{e},s}(o_{\ell}\mid\vb{o}_{<\ell})
    &=
    Q(o_{\ell}\mid\vb{o}_{<\ell})
    +
    3s\epsilon\,m_{\vb{e}}^{\vb{o}_{<\ell}}(o_{\ell}).
\end{align}
For brevity, write $\vb{o}:=\vb{o}_{1:N}$.
For a fixed nonzero $\vb{e}$, Eq.~\eqref{Eq_SM_TVD_Def} gives
\begin{align}
    \TVD(Q,R_{\vb{e}})
    =
    \sum_{\vb{o}}
    \max\!\left\{
        0,
        Q(\vb{o})-R_{\vb{e}}(\vb{o})
    \right\}.
\end{align}
Measurement-outcome sequences satisfying $Q(\vb{o})=0$ do not contribute to this sum.
For any measurement-outcome sequence satisfying $Q(\vb{o})>0$, the definition of $R_{\vb{e}}$ and the chain rule yield
\begin{align}
    Q(\vb{o})-R_{\vb{e}}(\vb{o})
    &=
    Q(\vb{o})
    \left[
        1-
        \mathds{E}_{s}
        \frac{R_{\vb{e},s}(\vb{o})}{Q(\vb{o})}
    \right]
    =
    Q(\vb{o})
    \left[
        1-
        \mathds{E}_{s}
        \prod_{\ell=1}^{N}
        \frac{R_{\vb{e},s}(o_{\ell}\mid\vb{o}_{<\ell})}{Q(o_{\ell}\mid\vb{o}_{<\ell})}
    \right]
    =
    Q(\vb{o})
    \left[
        1-
        \mathds{E}_{s}
        \prod_{\ell=1}^{N}
        \left(
            1+
            3s\epsilon
            \frac{m_{\vb{e}}^{\vb{o}_{<\ell}}(o_{\ell})}{Q(o_{\ell}\mid\vb{o}_{<\ell})}
        \right)
    \right].
\end{align}
For $Q(o_{\ell}\mid\vb{o}_{<\ell})>0$, define
\begin{align}
    G_{\vb{e}}^{\vb{o}_{\leq\ell}}
    :=
    \frac{
        m_{\vb{e}}^{\vb{o}_{<\ell}}(o_{\ell})
    }{
        Q(o_{\ell}\mid\vb{o}_{<\ell})
    }.
\end{align}
When $Q(o_{\ell}\mid\vb{o}_{<\ell})=0$, set $G_{\vb{e}}^{\vb{o}_{\leq\ell}}=0$.
In this case, the equality $R_{\vb{e},s}(o_{\ell}\mid\vb{o}_{<\ell})=Q(o_{\ell}\mid\vb{o}_{<\ell})+3s\epsilon\,m_{\vb{e}}^{\vb{o}_{<\ell}}(o_{\ell})$ and the nonnegativity of $R_{\vb{e},s}(o_{\ell}\mid\vb{o}_{<\ell})$ for both signs imply $m_{\vb{e}}^{\vb{o}_{<\ell}}(o_{\ell})=0$.
This convention is therefore consistent.
For $Q(o_{\ell}\mid\vb{o}_{<\ell})>0$, we have
\begin{align}
    1+
    3s\epsilon
    G_{\vb{e}}^{\vb{o}_{\leq\ell}}
    =
    \frac{
        R_{\vb{e},s}(o_{\ell}\mid\vb{o}_{<\ell})
    }{
        Q(o_{\ell}\mid\vb{o}_{<\ell})
    }
    \geq
    0,
    \qquad
    s\in\{\pm1\}.
\end{align}
It follows that $\left|3\epsilon G_{\vb{e}}^{\vb{o}_{\leq\ell}}\right|\leq1$.
The arithmetic--geometric mean inequality therefore gives
\begin{align}
    1-
    \mathds{E}_{s}
    \prod_{\ell=1}^{N}
    \left(
        1+
        3s\epsilon
        G_{\vb{e}}^{\vb{o}_{\leq\ell}}
    \right)
    &\leq
    1-
    \sqrt{
        \prod_{\ell=1}^{N}
        \left(
            1-
            3\epsilon
            G_{\vb{e}}^{\vb{o}_{\leq\ell}}
        \right)
        \left(
            1+
            3\epsilon
            G_{\vb{e}}^{\vb{o}_{\leq\ell}}
        \right)
    }
     =
    1-
    \prod_{\ell=1}^{N}
    \sqrt{
        1-
        9\epsilon^2
        \left(
            G_{\vb{e}}^{\vb{o}_{\leq\ell}}
        \right)^2
    }
    \\
    &\leq
    1-
    \prod_{\ell=1}^{N}
    \left[
        1-
        9\epsilon^2
        \left(
            G_{\vb{e}}^{\vb{o}_{\leq\ell}}
        \right)^2
    \right]
    \leq
    9\epsilon^2
    \sum_{\ell=1}^{N}
    \left(
        G_{\vb{e}}^{\vb{o}_{\leq\ell}}
    \right)^2.
\end{align}
The last two inequalities follow from $\sqrt{1-x}\geq1-x$ and $1-\prod_{\ell=1}^{N}(1-x_{\ell})\leq\sum_{\ell=1}^{N}x_{\ell}$ for $x,x_{\ell}\in[0,1]$.
Combining this estimate with the preceding likelihood-ratio expression gives
\begin{align}
    Q(\vb{o})-R_{\vb{e}}(\vb{o})
    &\leq
    9\epsilon^2
    Q(\vb{o})
    \sum_{\ell=1}^{N}
    \left(
        G_{\vb{e}}^{\vb{o}_{\leq\ell}}
    \right)^2.
\end{align}
Under the convention established above, the quotient below is set to zero whenever $Q(o_{\ell}\mid\vb{o}_{<\ell})=0$.
Summing the pointwise bound over all measurement-outcome sequences and then averaging over nonzero $\vb{e}$ gives
\begin{align}
    \mathds{E}_{\vb{e}\neq0}\TVD(Q,R_{\vb{e}})
    &\leq
    9\epsilon^2
    \sum_{\ell=1}^{N}
    \sum_{\vb{o}_{<\ell}}
    Q(\vb{o}_{<\ell})
    \mathds{E}_{\vb{e}\neq0}\!\left[
        \sum_{o_{\ell}}
        \frac{\left(m_{\vb{e}}^{\vb{o}_{<\ell}}(o_{\ell})\right)^2}{Q(o_{\ell}\mid\vb{o}_{<\ell})}
    \right].
    \label{Eq_SM_TVD_Response_Bound}
\end{align}
Equation~\eqref{Eq_SM_TVD_Response_Bound} reduces the TVD to a single-round estimate. It remains to bound the average squared response for every conditional input state and POVM selected along the adaptive protocol. 
We establish this estimate separately under the two one-sided restrictions.

For fixed previous measurement outcomes $\vb{o}_{<\ell}$, define
\begin{align}
    a_{\vb{x}}^{\vb{o}_{<\ell}}(o_{\ell})
    &:=
    \Tr\left[
        E_{o_{\ell}}^{\vb{o}_{<\ell}}
        \left(P_{\vb{x}}\otimes I_A\right)
        \rho_{SA}^{\vb{o}_{<\ell}}
        \left(P_{\vb{x}}\otimes I_A\right)
    \right],
    \\
    m_{\vb{e}}^{\vb{o}_{<\ell}}(o_{\ell})
    &=
    \frac{1}{4^n}
    \sum_{\vb{x}\in\mathbb{Z}_2^{2n}}
    (-1)^{\langle\vb{x},\vb{e}\rangle}
    a_{\vb{x}}^{\vb{o}_{<\ell}}(o_{\ell}).
\end{align}
We bound this expression separately under the input-state and POVM restrictions.

\subsection{Input-state restriction}
Assume that every input state satisfies
\begin{align}
    H_{\min}(S\mid A)_{\rho_{SA}^{\vb{o}_{<\ell}}}
    \geq
    -\eta.
\end{align}
By the definition of conditional min-entropy, for each set of previous measurement outcomes there exists a density operator $\sigma_A^{\vb{o}_{<\ell}}$ such that
\begin{align}
    \rho_{SA}^{\vb{o}_{<\ell}}
    \leq
    2^\eta\left(I_S\otimes\sigma_A^{\vb{o}_{<\ell}}\right).
\end{align}
When $\eta=0$, this condition includes every state satisfying the reduction criterion and hence every bound-entangled state.
For fixed previous measurement outcomes, define
\begin{align}
    t_{o_{\ell}}
    :=
    2^\eta
    \Tr\left[
        E_{o_{\ell}}^{\vb{o}_{<\ell}}
        \left(I_S\otimes\sigma_A^{\vb{o}_{<\ell}}\right)
    \right].
\end{align}
The state inequality implies, for every $\vb{x}\in\mathbb{Z}_2^{2n}$,
\begin{align}
    a_{\vb{x}}^{\vb{o}_{<\ell}}(o_{\ell})
    \leq
    t_{o_{\ell}}.
    \label{Eq_SM_state}
\end{align}
Let $\mathds{E}_{\vb{e}}$ denote the uniform average over all
$\vb{e}\in\mathbb{Z}_2^{2n}$, including $\vb{e}=0$.
Using $\mathds{E}_{\vb{e}}\left[(-1)^{\langle\vb{x}\oplus\vb{x}^{\prime},\vb{e}\rangle}\right]=\delta_{\vb{x},\vb{x}^{\prime}}$, together with $a_{\vb{x}}^{\vb{o}_{<\ell}}(o_{\ell})\leq t_{o_{\ell}}$, we obtain
\begin{align}
    \mathds{E}_{\vb{e}}\!\left[
        \left(
            m_{\vb{e}}^{\vb{o}_{<\ell}}(o_{\ell})
        \right)^2
    \right]
    &=
    \frac{1}{4^{2n}}
    \sum_{\vb{x}\in\mathbb{Z}_2^{2n}}
    \left(
        a_{\vb{x}}^{\vb{o}_{<\ell}}(o_{\ell})
    \right)^2 \leq \frac{1}{4^{2n}}
    \sum_{\vb{x}\in\mathbb{Z}_2^{2n}}
        a_{\vb{x}}^{\vb{o}_{<\ell}}(o_{\ell}) t_{o_{\ell}}
    \leq
    \frac{t_{o_{\ell}}}{4^n}
    Q(o_{\ell}\mid\vb{o}_{<\ell}).
\end{align}
Since the contribution from $\vb{e}=0$ is nonnegative, the zero-denominator convention gives
\begin{align}
    \mathds{E}_{\vb{e}\neq0}\!\left[
        \sum_{o_{\ell}}
        \frac{
            \left(
                m_{\vb{e}}^{\vb{o}_{<\ell}}(o_{\ell})
            \right)^2
        }{
            Q(o_{\ell}\mid\vb{o}_{<\ell})
        }
    \right]
    &\leq
    \frac{4^n}{4^n-1}
    \frac{1}{4^n}
    \sum_{o_{\ell}}t_{o_{\ell}}
    =
    \frac{4^n}{4^n-1}
    2^{\eta-n},
\end{align}
where the last equality follows from the completeness of the POVM:
\begin{align}
    \sum_{o_{\ell}}t_{o_{\ell}}
    =
    2^\eta
    \Tr\left[
        I_{SA}
        \left(
            I_S\otimes\sigma_A^{\vb{o}_{<\ell}}
        \right)
    \right]
    =
    2^{n+\eta}.
\end{align}
Substituting this estimate into Eq.~\eqref{Eq_SM_TVD_Response_Bound} and using $\sum_{\vb{o}_{<\ell}}Q(\vb{o}_{<\ell})=1$ gives
\begin{align}
    \mathds{E}_{\vb{e}\neq0}\TVD(Q,R_{\vb{e}})
    \leq
    9\epsilon^2N\frac{4^n}{4^n-1}2^{\eta-n}
    \leq
    12\epsilon^2N2^{\eta-n}.
\end{align}
Combining this upper bound with Eq.~\eqref{Eq_SM_TVD_Lower} yields
\begin{align}
    N
    =
    \Omega\!\left(\epsilon^{-2}2^{n-\eta}\right).
\end{align}
This proves the input-state-restricted case of Theorem~1 and the corresponding incoherent-adaptive case of Corollary~1.

\subsection{POVM restriction}
Assume that for each set of previous measurement outcomes there exist density operators $\{\sigma_{o_{\ell}}^{\vb{o}_{<\ell}}\}_{o_{\ell}}$ satisfying
\begin{align}
    \frac{E_{o_{\ell}}^{\vb{o}_{<\ell}}}{\Tr E_{o_{\ell}}^{\vb{o}_{<\ell}}}
    &\leq
    2^\eta\left(I_S\otimes\sigma_{o_{\ell}}^{\vb{o}_{<\ell}}\right),\qquad\sum_{o_{\ell}}\Tr\left(E_{o_{\ell}}^{\vb{o}_{<\ell}}\right)\sigma_{o_{\ell}}^{\vb{o}_{<\ell}}
    =
    2^n I_A.
\end{align}
The first inequality is imposed for every outcome with nonzero trace; an effect with zero trace vanishes and does not contribute.

For fixed previous measurement outcomes, let
$\rho_A^{\vb{o}_{<\ell}}:=\Tr_S\rho_{SA}^{\vb{o}_{<\ell}}$ and define
\begin{align}
t_{o_{\ell}}
:=
2^\eta
\Tr\left(E_{o_{\ell}}^{\vb{o}_{<\ell}}\right)
\Tr\left[
\sigma_{o_{\ell}}^{\vb{o}_{<\ell}}
\rho_A^{\vb{o}_{<\ell}}
\right].
\end{align}
Since $P_{\vb{x}}$ acts only on $S$, the POVM inequality gives
$a_{\vb{x}}^{\vb{o}_{<\ell}}(o_{\ell})\leq t_{o_{\ell}}$ for every
$\vb{x}\in\mathbb{Z}_2^{2n}$.
Using the averaging identity from the preceding subsection, we obtain
\begin{align}
\mathds{E}_{\vb{e}}\left[
\left(
m_{\vb{e}}^{\vb{o}_{<\ell}}(o_{\ell})
\right)^2
\right]
=
\frac{1}{4^{2n}}
\sum_{\vb{x}\in\mathbb{Z}_2^{2n}}
\left(
a_{\vb{x}}^{\vb{o}_{<\ell}}(o_{\ell})
\right)^2
\leq
\frac{t_{o_{\ell}}}{4^n}
Q(o_{\ell}\mid\vb{o}_{<\ell}).
\end{align}
Therefore, using the zero-denominator convention and conditioning on
$\vb{e}\neq0$,
\begin{align}
\mathds{E}_{\vb{e}\neq0}\left[
\sum_{o_{\ell}}
\frac{
\left(
m_{\vb{e}}^{\vb{o}_{<\ell}}(o_{\ell})
\right)^2
}{
Q(o_{\ell}\mid\vb{o}_{<\ell})
}
\right]
\leq
\frac{4^n}{4^n-1}
\frac{1}{4^n}
\sum_{o_{\ell}}t_{o_{\ell}}
=
\frac{4^n}{4^n-1}2^{\eta-n},
\end{align}
where
\begin{align}
\sum_{o_{\ell}}t_{o_{\ell}}
=
2^\eta
\Tr\left[
\left(
\sum_{o_{\ell}}
\Tr\left(E_{o_{\ell}}^{\vb{o}_{<\ell}}\right)
\sigma_{o_{\ell}}^{\vb{o}_{<\ell}}
\right)
\rho_A^{\vb{o}_{<\ell}}
\right]
=
2^{\eta+n}.
\end{align}
Substituting this estimate into Eq.~\eqref{Eq_SM_TVD_Response_Bound} gives
\begin{align}
    \mathds{E}_{\vb{e}\neq0}\TVD(Q,R_{\vb{e}})
    \leq
    9\epsilon^2N\frac{4^n}{4^n-1}2^{\eta-n}
    \leq
    12\epsilon^2N2^{\eta-n}.
\end{align}
Combining this upper bound with Eq.~\eqref{Eq_SM_TVD_Lower} yields
\begin{align}
    N
    =
    \Omega\!\left(\epsilon^{-2}2^{n-\eta}\right).
\end{align}
This proves the POVM-restricted case of Theorem~1 and the corresponding incoherent-adaptive case of Corollary~1.

\section{Bounds for one-sided coherent adaptive Pauli-channel learning}

In this section, we prove Theorem~2 and the coherent-adaptive part of Corollary~1 of the main text.
We first recall the precise statements proved in this section.
\medskip

\noindent\textbf{Theorem 2 (main text).}
Assume $0<\epsilon\leq1/3$.
Consider either of the two one-sided coherent adaptive models.
In the input-restricted model, each round uses a fresh reduction-criterion-restricted input that is uncorrelated with the retained quantum system conditioned on the previous measurement outcomes, while arbitrary coherent processing of the retained quantum systems is allowed.
In the measurement-restricted model, the inputs may be entangled across channel uses, while in each round a reduction-criterion-restricted POVM acts only on the current channel output and its associated ancilla.
In either model, Pauli-channel learning requires
\begin{align}
    N
    =
    \Omega\left(
        \epsilon^{-1}2^{n/2}
    \right).
\end{align}

\medskip
\noindent\textbf{Corollary 1 (coherent-adaptive part; main text).}
Assume $0<\epsilon\leq1/3$.
Suppose that the restricted input states or normalized measurement effects satisfy
$H_{\min}(S\mid A)\geq-\eta$ for some fixed $\eta \geq 0$.
Then either one-sided coherent adaptive model requires
\begin{align}
    N
    =
    \Omega\left(
        \epsilon^{-1}2^{(n-\eta)/2}
    \right).
    \label{Eq_SM_Coherent_Target}
\end{align}

We prove the quantitative lower bound in Eq.~\eqref{Eq_SM_Coherent_Target}.
Setting $\eta=0$ gives Theorem~2 because the conditional-min-entropy restriction at $\eta=0$ contains all resources satisfying the reduction criterion.

The proof uses the same randomly spiked channel family and learning-to-testing reduction as in Section~\ref{SM_Section_1}.
Unlike the incoherent-adaptive proof, however, we cannot decompose the distinguishability into roundwise squared classical responses because quantum information may be retained across rounds.
Instead, we interpolate between the null and alternative protocols by replacing one channel use at a time.
Trace-norm contractivity ensures that subsequent coherent processing cannot increase the distinguishability introduced at the replaced channel use.
The one-sided conditional-min-entropy restriction, together with the leftover-hash lemma proved below, bounds the average distinguishability introduced by each replacement by
$O\left(\epsilon 2^{-(n-\eta)/2}\right)$.
Summing over the $N$ channel uses gives Eq.~\eqref{Eq_SM_Coherent_Target}.

We consider the input-restricted and measurement-restricted models separately because the quantum systems retained between channel uses have different forms.
The same hybrid argument will apply once the corresponding branch states have been defined.

The learning-to-testing reduction of Section~\ref{SM_Section_1} is independent of whether quantum memory is retained.
After including any final measurement in the protocol, it applies to the complete classical outcome record in both coherent models.
Let $\vb{o}_{\leq N}$ denote this record, and let
$Q(\vb{o}_{\leq N})$ and $R_{\vb{e}}(\vb{o}_{\leq N})$
denote its distributions under the null and sign-averaged alternative hypotheses, respectively.
Consequently,
\begin{align}
    \frac{1}{3}\leq\mathds{E}_{\vb{e}\neq0}\TVD\!\left(Q(\vb{o}_{\leq N}),R_{\vb{e}}(\vb{o}_{\leq N})\right).
    \label{Eq_SM_Coherent_TVD_Lower}
\end{align}

For $m=0,\ldots,N$, define the hybrid channel sequence
\begin{align}
    \vec{\Lambda}_{\vb{e},m}^{s}
    &:=
    \left(
        \underbrace{
            \Lambda_{\dep},\ldots,\Lambda_{\dep}
        }_{m\text{ copies}},
        \underbrace{
            \Lambda_{\vb{e},s},\ldots,\Lambda_{\vb{e},s}
        }_{N-m\text{ copies}}
    \right),
    \label{Eq_SM_Coherent_Hybrid_Sequence}
\end{align}
where $m=N$ corresponds to the null hypothesis, while $m=0$
corresponds to the alternative hypothesis with fixed $\vb{e}$ and
$s$.

\subsection{Input-restricted coherent protocols}
\subsubsection{Model description}
Let $\vec{\Lambda}:=(\Lambda_1,\ldots,\Lambda_N)$ denote the ordered channel sequence used to describe the protocol.
In the null and alternative hypotheses, all components of $\vec{\Lambda}$ are identical, whereas the hybrid sequences in Eq.~\eqref{Eq_SM_Coherent_Hybrid_Sequence} may contain different channels.
For each round $\ell\in\{1,\ldots,N\}$, let $S_{\ell}$ denote the $n$-qubit system register on which $\Lambda_{\ell}$ acts, let $A_{\ell}$ denote an associated ancilla register of arbitrary dimension, and let $W_{\ell-1}$ denote the quantum system retained after the preceding rounds.
Conditioned on the previous measurement outcomes $\vb{o}_{<\ell}$, the learner prepares a fresh input state $\rho_{\ell}^{\vb{o}_{<\ell}}:=\rho_{S_{\ell}A_{\ell}}^{\vb{o}_{<\ell}}$ satisfying
\begin{align}
    H_{\min}(S_{\ell}\mid A_{\ell})_{\rho_{\ell}^{\vb{o}_{<\ell}}}\geq-\eta.
\end{align}
Conditioned on $\vb{o}_{<\ell}$, the fresh input state is uncorrelated with $W_{\ell-1}$.
After $\Lambda_{\ell}$ acts on $S_{\ell}$, the learner may apply an arbitrary quantum instrument to $W_{\ell-1}S_{\ell}A_{\ell}$, produce the measurement outcome $o_{\ell}$, and retain an arbitrary quantum system $W_{\ell}$.
We call the point immediately after round $\ell$ and before the next input is prepared the $\ell$th causal cut.
Write $\vb{o}_{\leq\ell}:=(o_1,\ldots,o_{\ell})=\vb{o}_{<\ell+1}$.
At the $\ell$th causal cut, let $\Gamma_{\vb{o}_{\leq\ell}}^{\vec{\Lambda},\ell}$ denote the subnormalized state of $W_{\ell}$ associated with the measurement outcomes $\vb{o}_{\leq\ell}$.
The corresponding classical--quantum state is defined as follows:
\begin{align}
    \Omega_{\ell}^{\vec{\Lambda}}:=\sum_{\vb{o}_{\leq\ell}}\ketbra{\vb{o}_{\leq\ell}}{\vb{o}_{\leq\ell}}\otimes\Gamma_{\vb{o}_{\leq\ell}}^{\vec{\Lambda},\ell}.
\end{align}
The probability of each branch is obtained from its trace:
\begin{align}
    \Tr\left[\Gamma_{\vb{o}_{\leq\ell}}^{\vec{\Lambda},\ell}\right]=\Pr_{\vec{\Lambda}}\!\left[\vb{o}_{\leq\ell}\right].
\end{align}
We take $W_0$ to be one-dimensional and set $\Gamma_{\emptyset}^{\vec{\Lambda},0}=1$.

\begin{figure}[t]
    \centering
    \includegraphics[width=0.97\textwidth]{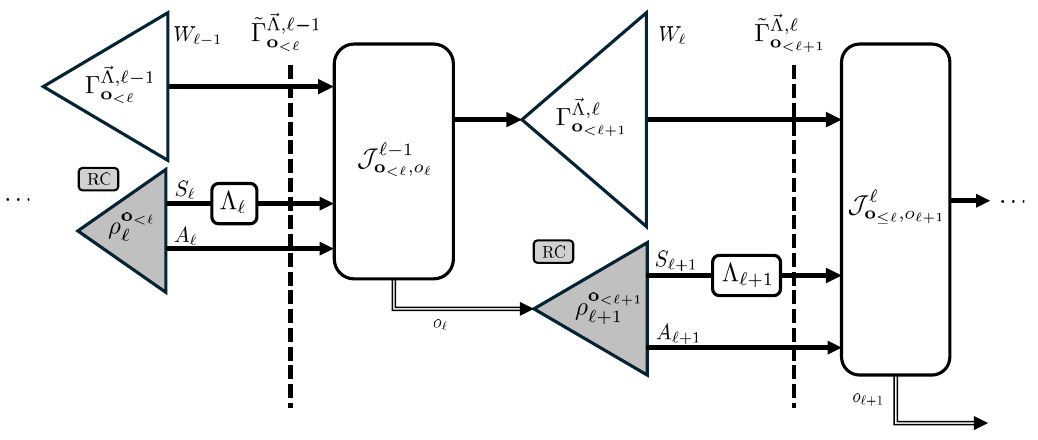}
\caption{
Input-restricted coherent adaptive protocol.
In round $\ell$, the learner prepares a fresh state $\rho_{\ell}^{\vb{o}_{<\ell}}$ on $S_{\ell}A_{\ell}$ satisfying the conditional-min-entropy restriction and sends $S_{\ell}$ through $\Lambda_{\ell}$.
An unrestricted quantum instrument $\{\mathcal{J}_{\vb{o}_{<\ell},o_{\ell}}^{\ell-1}\}_{o_{\ell}}$ acts on $W_{\ell-1}S_{\ell}A_{\ell}$, produces the measurement outcome $o_{\ell}$, and outputs the retained system $W_{\ell}$.
The next fresh input is selected according to $\vb{o}_{\leq \ell}$.
}
    \label{fig_SM_Quantum_Restricted_Input}
\end{figure}

The transition from the $(\ell-1)$th causal cut to the $\ell$th causal cut is shown in Fig.~\ref{fig_SM_Quantum_Restricted_Input}.
Before the round-$\ell$ instrument is applied, the joint state is:
\begin{align}
    \tilde{\Gamma}_{\vb{o}_{<\ell}}^{\vec{\Lambda},\ell-1}:=\left(\id_{W_{\ell-1}}\otimes\Lambda_{\ell}\otimes\id_{A_{\ell}}\right)\!\left(\Gamma_{\vb{o}_{<\ell}}^{\vec{\Lambda},\ell-1}\otimes\rho_{\ell}^{\vb{o}_{<\ell}}\right).
\end{align}
For each fixed sequence of previous measurement outcomes, the round-$\ell$ operation is described by the quantum instrument
\begin{align}
    \left\{\mathcal{J}_{\vb{o}_{<\ell},o_{\ell}}^{\ell-1}:\mathcal{L}(W_{\ell-1}S_{\ell}A_{\ell})\rightarrow\mathcal{L}(W_{\ell})\right\}_{o_{\ell}}.
\end{align}
Each branch map $\mathcal{J}_{\vb{o}_{<\ell},o_{\ell}}^{\ell-1}$ is completely positive and trace-nonincreasing (CPTNI).
It may include arbitrary post-channel processing, measurements, ancillary preparation, discarding of subsystems, and the retention of $W_{\ell}$ for subsequent rounds.
Any classical outcomes generated during round $\ell$ are included in the composite label $o_{\ell}$.
For each fixed $\vb{o}_{<\ell}$, the normalization condition is
\begin{align}
    \sum_{o_{\ell}}
    \mathcal{J}_{\vb{o}_{<\ell},o_{\ell}}^{\ell-1}
    \quad\text{is CPTP}.
\end{align}

After $o_{\ell}$ is stored in a classical register, the branch maps define the outcome-labeled CPTP map introduced in the lower-bound proof.
The updated branch state is:
\begin{align}
    \Gamma_{\vb{o}_{\leq\ell}}^{\vec{\Lambda},\ell}=\mathcal{J}_{\vb{o}_{<\ell},o_{\ell}}^{\ell-1}\!\left(\tilde{\Gamma}_{\vb{o}_{<\ell}}^{\vec{\Lambda},\ell-1}\right).
\end{align}
Any final measurement of the retained system can be included in the round-$N$ instrument, so we take $W_N$ to be one-dimensional.
The final classical--quantum state and outcome probabilities are therefore
\begin{align}
    \Omega_N^{\vec{\Lambda}}
    &=
    \sum_{\vb{o}_{\leq N}}
    \ketbra{\vb{o}_{\leq N}}{\vb{o}_{\leq N}}
    \otimes
    \Gamma_{\vb{o}_{\leq N}}^{\vec{\Lambda},N},
    \qquad
    \Pr_{\vec{\Lambda}}\left[\vb{o}_{\leq N}\right]
    =
    \Tr\left[
        \Gamma_{\vb{o}_{\leq N}}^{\vec{\Lambda},N}
    \right].
\end{align}

\subsubsection{Lower-bound proof}
For fixed nonzero $\vb{e}$ and $s\in\{\pm1\}$, let $R_{\vb{e},s}(\vb{o}_{\leq N}):=\Pr_{\vec{\Lambda}_{\vb{e},0}^{s}}\!\left[\vb{o}_{\leq N}\right]$.
The null distribution satisfies $Q(\vb{o}_{\leq N})=\Pr_{\vec{\Lambda}_{\vb{e},N}^{s}}\!\left[\vb{o}_{\leq N}\right]$, which is independent of $\vb{e}$ and $s$.
The sign-averaged alternative distribution is $R_{\vb{e}}(\vb{o}_{\leq N})=\mathds{E}_{s}R_{\vb{e},s}(\vb{o}_{\leq N})$.
By Eq.~\eqref{Eq_SM_Coherent_TVD_Lower}, it remains to upper bound $\mathds{E}_{\vb{e}\neq0}\TVD(Q,R_{\vb{e}})$.
For $m=0,\ldots,N$ and $\ell=1,\ldots,N$, use the shorthand
\begin{align}
    \Omega_{m,\ell}^{\vb{e},s}:=\Omega_{\ell}^{\vec{\Lambda}_{\vb{e},m}^{s}},
    \qquad
    \Gamma_{\vb{o}_{\leq\ell}}^{m,s,\ell}:=\Gamma_{\vb{o}_{\leq\ell}}^{\vec{\Lambda}_{\vb{e},m}^{s},\ell}.
\end{align}
Because $W_N$ is one-dimensional, for each $s\in\{\pm1\}$ the quantity $\left\|\Omega_{N,N}^{\vb{e},s}-\Omega_{0,N}^{\vb{e},s}\right\|_1$ equals the $\ell_1$ distance between $Q$ and $R_{\vb{e},s}$.
For each fixed nonzero $\vb{e}$, the triangle inequality for the sign average followed by the hybrid telescoping sum gives
\begin{align}
    \TVD(Q,R_{\vb{e}})
    &\leq
    \frac{1}{4}\sum_{s=\pm1}\left\|\Omega_{N,N}^{\vb{e},s}-\Omega_{0,N}^{\vb{e},s}\right\|_1
    \leq
    \frac{1}{4}\sum_{s=\pm1}\sum_{m=1}^{N}\left\|\Omega_{m,N}^{\vb{e},s}-\Omega_{m-1,N}^{\vb{e},s}\right\|_1.
\end{align}

We first show that rounds after the differing channel use cannot increase the trace-norm difference.
For fixed previous measurement outcomes $\vb{o}_{<\ell}$, define the outcome-labeled map
\begin{align}
    \mathcal{M}_{\vb{o}_{<\ell}}^{\ell}(X):=\sum_{o_{\ell}}\ketbra{\vb{o}_{\leq\ell}}{\vb{o}_{\leq\ell}}\otimes\mathcal{J}_{\vb{o}_{<\ell},o_{\ell}}^{\ell-1}(X).
\end{align}
By the instrument normalization condition, $\mathcal{M}_{\vb{o}_{<\ell}}^{\ell}$ is CPTP.
Fix $m<\ell$ and define
\begin{align}
    \Delta_{\vb{o}_{<\ell}}^{m,s,\ell-1}:=\Gamma_{\vb{o}_{<\ell}}^{m,s,\ell-1}-\Gamma_{\vb{o}_{<\ell}}^{m-1,s,\ell-1}.
\end{align}
For $m<\ell$, both hybrid sequences use $\Lambda_{\vb{e},s}$ in round $\ell$; therefore, their difference after that round is
\begin{align}
    \Omega_{m,\ell}^{\vb{e},s}-\Omega_{m-1,\ell}^{\vb{e},s}
    =
    \sum_{\vb{o}_{<\ell}}\mathcal{M}_{\vb{o}_{<\ell}}^{\ell}\!\left[
        \left(\id_{W_{\ell-1}}\otimes\Lambda_{\vb{e},s}\otimes\id_{A_{\ell}}\right)
        \left(\Delta_{\vb{o}_{<\ell}}^{m,s,\ell-1}\otimes\rho_{\ell}^{\vb{o}_{<\ell}}\right)
    \right].
\end{align}
The triangle inequality, trace-norm contractivity under the channel and the outcome-labeled map, and $\|X\otimes\rho\|_1=\|X\|_1$ for a density operator $\rho$ give
\begin{align}
    \left\|\Omega_{m,\ell}^{\vb{e},s}-\Omega_{m-1,\ell}^{\vb{e},s}\right\|_1
    &\leq
    \sum_{\vb{o}_{<\ell}}\left\|\Delta_{\vb{o}_{<\ell}}^{m,s,\ell-1}\right\|_1
    =
    \left\|\Omega_{m,\ell-1}^{\vb{e},s}-\Omega_{m-1,\ell-1}^{\vb{e},s}\right\|_1.
\end{align}
The last equality follows because branches with different previous measurement outcomes occupy orthogonal classical subspaces.

We next bound the trace-norm difference introduced at the $m$th channel use.
Before round $m$, the two hybrid sequences are identical; therefore, $\Gamma_{\vb{o}_{<m}}^{m,s,m-1}=\Gamma_{\vb{o}_{<m}}^{m-1,s,m-1}$ for every sequence of previous measurement outcomes.
In round $m$, the first sequence uses $\Lambda_{\dep}$ and the second uses $\Lambda_{\vb{e},s}$, while the fresh input state is the same for fixed $\vb{o}_{<m}$.
Using the outcome-labeled map and trace-norm contractivity gives
\begin{align}
    \left\|\Omega_{m,m}^{\vb{e},s}-\Omega_{m-1,m}^{\vb{e},s}\right\|_1
    &\leq
    \sum_{\vb{o}_{<m}}\Tr\left[\Gamma_{\vb{o}_{<m}}^{m,s,m-1}\right]
    \left\|\left[\left(\Lambda_{\dep}-\Lambda_{\vb{e},s}\right)\otimes\id_{A_m}\right]\!\left(\rho_m^{\vb{o}_{<m}}\right)\right\|_1.
\end{align}
The branch traces are the prefix probabilities $Q(\vb{o}_{<m})$ under the null hypothesis.
The preceding propagation bound and the hybrid estimate therefore imply
\begin{align}
    \TVD(Q,R_{\vb{e}})
    &\leq
    \frac{1}{4}\sum_{s=\pm1}\sum_{m=1}^{N}\sum_{\vb{o}_{<m}}
    Q(\vb{o}_{<m})
    \left\|\left[\left(\Lambda_{\dep}-\Lambda_{\vb{e},s}\right)\otimes\id_{A_m}\right]\!\left(\rho_m^{\vb{o}_{<m}}\right)\right\|_1.
\end{align}
For fixed $m$ and $\vb{o}_{<m}$, both $Q(\vb{o}_{<m})$ and $\rho_m^{\vb{o}_{<m}}$ are independent of $\vb{e}$ and $s$.
Averaging over nonzero $\vb{e}$ therefore gives
\begin{align}
    \mathds{E}_{\vb{e}\neq0}\TVD(Q,R_{\vb{e}})
    &\leq
    \frac{1}{4}\sum_{s=\pm1}\sum_{m=1}^{N}\sum_{\vb{o}_{<m}}
    Q(\vb{o}_{<m})
    \mathds{E}_{\vb{e}\neq0}\left\|\left[\left(\Lambda_{\dep}-\Lambda_{\vb{e},s}\right)\otimes\id_{A_m}\right]\!\left(\rho_m^{\vb{o}_{<m}}\right)\right\|_1.
\end{align}
Using the Pauli-error representation from Section~\ref{SM_Section_1}, the channel difference satisfies
\begin{align}
    \left[\left(\Lambda_{\dep}-\Lambda_{\vb{e},s}\right)\otimes\id_A\right](\rho)
    =
    -\frac{3s\epsilon}{4^n}\sum_{\vb{x}\in\mathbb{Z}_2^{2n}}(-1)^{\langle\vb{x},\vb{e}\rangle}\left(P_{\vb{x}}\otimes I_A\right)\rho\left(P_{\vb{x}}\otimes I_A\right).
\end{align}
The lemma below bounds the average trace norm of the normalized sum on the right-hand side.

\begin{lemma}\label{SM_lemma_quantum_adaptive_input_restriction}
Let $S$ be an $n$-qubit system and let $A$ be an arbitrary ancilla system.
For every state $\rho_{SA}$ satisfying $H_{\min}(S\mid A)_{\rho}\geq-\eta$, the following bound holds:
\begin{align}
    \mathds{E}_{\vb{e}\neq0}
    \left\|
        \frac{1}{4^n}
        \sum_{\vb{x}\in\mathbb{Z}_2^{2n}}
        (-1)^{\langle\vb{x},\vb{e}\rangle}
        \left(P_{\vb{x}}\otimes I_A\right)
        \rho_{SA}
        \left(P_{\vb{x}}\otimes I_A\right)
    \right\|_1
    \leq
    3\cdot2^{-(n-\eta)/2}.
\end{align}
\end{lemma}
Applying Lemma~\ref{SM_lemma_quantum_adaptive_input_restriction} to each fresh input state in the preceding averaged hybrid bound and using $\sum_{\vb{o}_{<m}}Q(\vb{o}_{<m})=1$ gives
\begin{align}
    \mathds{E}_{\vb{e}\neq0}\TVD(Q,R_{\vb{e}})
    \leq
    \frac{1}{4}
    \sum_{s=\pm1}
    \sum_{m=1}^{N}
    9\epsilon\,2^{-(n-\eta)/2}
    =
    \frac{9}{2}N\epsilon\,2^{-(n-\eta)/2}.
\end{align}
Combining this estimate with Eq.~\eqref{Eq_SM_Coherent_TVD_Lower} gives
\begin{align}
    N
    =
    \Omega\!\left(\epsilon^{-1}2^{(n-\eta)/2}\right).
\end{align}
This proves the input-restricted case of Theorem~2 and the corresponding coherent-adaptive case of Corollary~1.

\subsection{Measurement-restricted coherent protocols}
\subsubsection{Model description}
We retain the notation $\vec{\Lambda}:=(\Lambda_1,\ldots,\Lambda_N)$ for the ordered channel sequence.
For each round $\ell$, let $S_{\ell}$ denote the $n$-qubit system on which $\Lambda_{\ell}$ acts and let $A_{\ell}$ denote its associated ancilla.
The learner may prepare states that are entangled across different channel-use slots.
Conditioned on the previous measurement outcomes $\vb{o}_{<\ell}:=(o_1,\ldots,o_{\ell-1})$, the POVM $\{E_{o_{\ell}}^{\vb{o}_{<\ell}}\}_{o_{\ell}}$ acts only on the current pair $S_{\ell}A_{\ell}$.
No measurement acts jointly on $S_{\ell}A_{\ell}$ and the remaining channel-use registers.
For each fixed $\vb{o}_{<\ell}$ and every outcome with $\Tr E_{o_{\ell}}^{\vb{o}_{<\ell}}>0$, there exists a density operator $\sigma_{A_{\ell}}^{\vb{o}_{<\ell},o_{\ell}}$ such that
\begin{align}
    \frac{E_{o_{\ell}}^{\vb{o}_{<\ell}}}{\Tr E_{o_{\ell}}^{\vb{o}_{<\ell}}}
    &\leq
    2^\eta I_{S_{\ell}}\otimes\sigma_{A_{\ell}}^{\vb{o}_{<\ell},o_{\ell}},
    \qquad
    \sum_{o_{\ell}}\Tr\left(E_{o_{\ell}}^{\vb{o}_{<\ell}}\right)\sigma_{A_{\ell}}^{\vb{o}_{<\ell},o_{\ell}}
    =
    2^n I_{A_{\ell}}.
\end{align}
We denote these conditions by $H_{\min}(S_{\ell}\mid A_{\ell})_{\{E_{o_{\ell}}^{\vb{o}_{<\ell}}\}_{o_{\ell}}}\geq-\eta$.

\begin{figure}[t]
    \centering
    \includegraphics[width=0.97\textwidth]{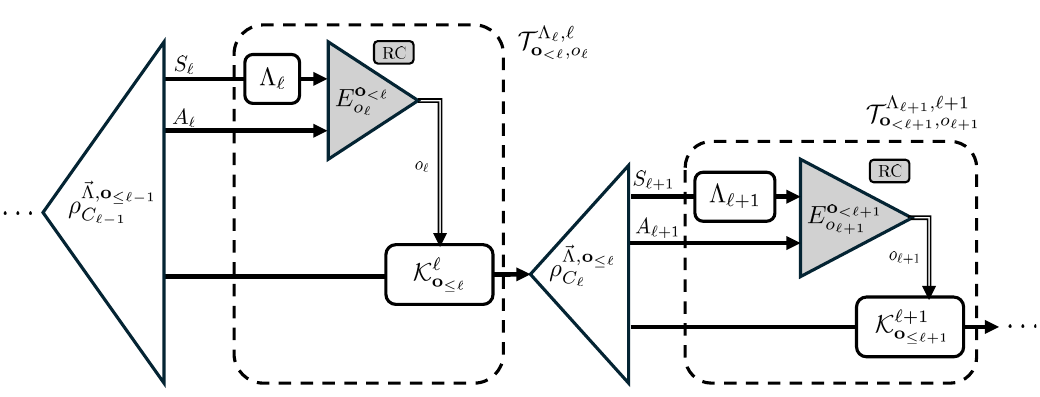}
\caption{
Measurement-restricted coherent adaptive protocol.
Before round $\ell$, the unconsumed channel-use registers may be entangled.
The channel $\Lambda_{\ell}$ acts on $S_{\ell}$, and the restricted POVM $\{E_{o_{\ell}}^{\vb{o}_{<\ell}}\}_{o_{\ell}}$ acts only on $S_{\ell}A_{\ell}$ and produces the measurement outcome $o_{\ell}$.
An outcome-dependent CPTP map $\mathcal{K}_{\vb{o}_{\leq\ell}}^{\ell}$ coherently processes the unmeasured registers and prepares them for subsequent rounds.
The dashed box represents the corresponding CPTNI branch map $\mathcal{T}_{\vb{o}_{<\ell},o_{\ell}}^{\Lambda_{\ell},\ell}$.
}
    \label{fig_SM_Quantum_Restricted_Measurement}
\end{figure}

For each round $\ell$, define the unmeasured register $C_{\ell}:=S_{\ell+1:N}A_{\ell+1:N}$, with $C_N$ one-dimensional.
Immediately before round $\ell$, let $\rho_{S_{\ell}A_{\ell}C_{\ell}}^{\vec{\Lambda},\vb{o}_{<\ell}} = \rho_{C_{\ell-1}}^{\vec{\Lambda},\vb{o}_{<\ell}}$ denote the subnormalized state of all unconsumed registers associated with the previous measurement outcomes $\vb{o}_{<\ell}$.
Its trace is the probability of these previous measurement outcomes:
\begin{align}
    \Tr\left[\rho_{S_{\ell}A_{\ell}C_{\ell}}^{\vec{\Lambda},\vb{o}_{<\ell}}\right]
    =
    \Pr_{\vec{\Lambda}}[\vb{o}_{<\ell}].
\end{align}
The dependence on $\vec{\Lambda}$ arises from the preceding channel uses and does not imply that the learner knows the channel sequence.
After $\Lambda_{\ell}$ acts and outcome $o_{\ell}$ is obtained, the subnormalized state of $C_{\ell}$ is
\begin{align}
    \widetilde{\rho}_{C_{\ell}}^{\vec{\Lambda},\vb{o}_{\leq\ell}}
    &:=
    \Tr_{S_{\ell}A_{\ell}}\left[
        \left(E_{o_{\ell}}^{\vb{o}_{<\ell}}\otimes I_{C_{\ell}}\right)
        \left(\Lambda_{\ell}\otimes\id_{A_{\ell}C_{\ell}}\right)
        \left(\rho_{S_{\ell}A_{\ell}C_{\ell}}^{\vec{\Lambda},\vb{o}_{<\ell}}\right)
    \right].
\end{align}
Conditioned on $\vb{o}_{\leq\ell}$, the learner applies a CPTP map $\mathcal{K}_{\vb{o}_{\leq\ell}}^{\ell}$ to $C_{\ell}$ to prepare the unconsumed registers for the subsequent rounds.
For fixed $\vb{o}_{<\ell}$ and $o_{\ell}$, define the branch map
\begin{align}
    \mathcal{T}_{\vb{o}_{<\ell},o_{\ell}}^{\Lambda_{\ell},\ell}(X)
    &:=
    \mathcal{K}_{\vb{o}_{\leq\ell}}^{\ell}\!\left(
        \Tr_{S_{\ell}A_{\ell}}\!\left[
            \left(E_{o_{\ell}}^{\vb{o}_{<\ell}}\otimes I_{C_{\ell}}\right)
            \left(\Lambda_{\ell}\otimes\id_{A_{\ell}C_{\ell}}\right)(X)
        \right]
    \right).
\end{align}
Each $\mathcal{T}_{\vb{o}_{<\ell},o_{\ell}}^{\Lambda_{\ell},\ell}$ is CPTNI, and $\sum_{o_{\ell}}\mathcal{T}_{\vb{o}_{<\ell},o_{\ell}}^{\Lambda_{\ell},\ell}$ is CPTP.
The corresponding outcome-labeled map is
\begin{align}
    \mathcal{M}_{\vb{o}_{<\ell}}^{\Lambda_{\ell},\ell}(X)
    &:=
    \sum_{o_{\ell}}
    \ketbra{\vb{o}_{\leq\ell}}{\vb{o}_{\leq\ell}}
    \otimes
    \mathcal{T}_{\vb{o}_{<\ell},o_{\ell}}^{\Lambda_{\ell},\ell}(X),
\end{align}
and it is CPTP.
The updated branch state is $\rho_{C_{\ell}}^{\vec{\Lambda},\vb{o}_{\leq\ell}}:=\mathcal{T}_{\vb{o}_{<\ell},o_{\ell}}^{\Lambda_{\ell},\ell}(\rho_{S_{\ell}A_{\ell}C_{\ell}}^{\vec{\Lambda},\vb{o}_{<\ell}}) = \mathcal{T}_{\vb{o}_{<\ell},o_{\ell}}^{\Lambda_{\ell},\ell}(\rho_{C_{\ell-1}}^{\vec{\Lambda},\vb{o}_{\leq \ell-1}})$, where $C_{\ell}$ is relabeled as $S_{\ell+1}A_{\ell+1}C_{\ell+1}$ before the next round.
At the $\ell$th causal cut, the classical--quantum state is
\begin{align}
    \Omega_{\ell}^{\vec{\Lambda}}
    &:=
    \sum_{\vb{o}_{\leq\ell}}
    \ketbra{\vb{o}_{\leq\ell}}{\vb{o}_{\leq\ell}}
    \otimes
    \rho_{C_{\ell}}^{\vec{\Lambda},\vb{o}_{\leq\ell}}.
\end{align}

\subsubsection{Lower-bound proof}
For $m=0,\ldots,N$ and $\ell=1,\ldots,N$, define $\Omega_{m,\ell}^{\vb{e},s}:=\Omega_{\ell}^{\vec{\Lambda}_{\vb{e},m}^{s}}$.
For fixed nonzero $\vb{e}$ and $s\in\{\pm1\}$, let $R_{\vb{e},s}(\vb{o}_{\leq N}):=\Pr_{\vec{\Lambda}_{\vb{e},0}^{s}}[\vb{o}_{\leq N}]$ and $Q(\vb{o}_{\leq N}):=\Pr_{\vec{\Lambda}_{\vb{e},N}^{s}}[\vb{o}_{\leq N}]$.
The sign-averaged alternative distribution is $R_{\vb{e}}(\vb{o}_{\leq N})=\mathds{E}_{s}R_{\vb{e},s}(\vb{o}_{\leq N})$.
Because $C_N$ is one-dimensional, the trace norm of the final classical--quantum state difference equals the $\ell_1$ distance between the corresponding measurement-outcome distributions.
For each fixed nonzero $\vb{e}$, the triangle inequality for the sign average and the hybrid telescoping sum give
\begin{align}
    \TVD(Q,R_{\vb{e}})
    &\leq
    \frac{1}{4}
    \sum_{s=\pm1}
    \left\|
        \Omega_{N,N}^{\vb{e},s}
        -
        \Omega_{0,N}^{\vb{e},s}
    \right\|_1
    \leq
    \frac{1}{4}
    \sum_{s=\pm1}
    \sum_{m=1}^{N}
    \left\|
        \Omega_{m,N}^{\vb{e},s}
        -
        \Omega_{m-1,N}^{\vb{e},s}
    \right\|_1.
\end{align}
We first show that rounds after the differing channel use cannot increase the trace-norm difference.
Fix $m<\ell$ and define
\begin{align}
    \Delta_{\vb{o}_{<\ell}}^{m,s,\ell-1}
    &:=
    \rho_{S_{\ell}A_{\ell}C_{\ell}}^{\vec{\Lambda}_{\vb{e},m}^{s},\vb{o}_{<\ell}}
    -
    \rho_{S_{\ell}A_{\ell}C_{\ell}}^{\vec{\Lambda}_{\vb{e},m-1}^{s},\vb{o}_{<\ell}}.
\end{align}
For $m<\ell$, both hybrid sequences use $\Lambda_{\vb{e},s}$ in round $\ell$, and the protocol operations depend only on the previous measurement outcomes.
Therefore, the same outcome-labeled CPTP map acts on both branch states, and
\begin{align}
    \Omega_{m,\ell}^{\vb{e},s}
    -
    \Omega_{m-1,\ell}^{\vb{e},s}
    =
    \sum_{\vb{o}_{<\ell}}
    \mathcal{M}_{\vb{o}_{<\ell}}^{\Lambda_{\vb{e},s},\ell}\!\left(
        \Delta_{\vb{o}_{<\ell}}^{m,s,\ell-1}
    \right).
\end{align}
The triangle inequality and trace-norm contractivity give
\begin{align}
    \left\|
        \Omega_{m,\ell}^{\vb{e},s}
        -
        \Omega_{m-1,\ell}^{\vb{e},s}
    \right\|_1
    &\leq
    \sum_{\vb{o}_{<\ell}}
    \left\|
        \Delta_{\vb{o}_{<\ell}}^{m,s,\ell-1}
    \right\|_1
    =
    \left\|
        \Omega_{m,\ell-1}^{\vb{e},s}
        -
        \Omega_{m-1,\ell-1}^{\vb{e},s}
    \right\|_1.
\end{align}
The final equality follows because distinct sequences of previous measurement outcomes occupy orthogonal classical subspaces.

We next bound the trace-norm difference introduced in round $m$.
Before round $m$, the two hybrid sequences are identical; therefore, for every sequence of previous measurement outcomes,
\begin{align}
    \rho_{S_mA_mC_m}^{\vec{\Lambda}_{\vb{e},m}^{s},\vb{o}_{<m}}
    =
    \rho_{S_mA_mC_m}^{\vec{\Lambda}_{\vb{e},m-1}^{s},\vb{o}_{<m}}.
\end{align}
The common branch trace is the null prefix probability $Q(\vb{o}_{<m})$.
For every $\vb{o}_{<m}$ satisfying $Q(\vb{o}_{<m})>0$, define the normalized conditional state
\begin{align}
    \tau_m^{\vb{o}_{<m}}
    :=
    \frac{
        \rho_{S_mA_mC_m}^{\vec{\Lambda}_{\vb{e},m}^{s},\vb{o}_{<m}}
    }{
        Q(\vb{o}_{<m})
    }.
\end{align}
Sequences satisfying $Q(\vb{o}_{<m})=0$ make no contribution and are omitted below.
Because the first $m-1$ channels are depolarizing in both hybrid sequences, $Q(\vb{o}_{<m})$ and $\tau_m^{\vb{o}_{<m}}$ are independent of $\vb{e}$ and $s$.
Trace-norm contractivity under the outcome-dependent CPTP maps $\mathcal{K}_{\vb{o}_{\leq m}}^{m}$ gives
\begin{align}
    \left\|\Omega_{m,m}^{\vb{e},s}-\Omega_{m-1,m}^{\vb{e},s}\right\|_1
    &\leq
    \sum_{\vb{o}_{<m}}Q(\vb{o}_{<m})
    \sum_{o_m}
    \left\|
        \Tr_{S_mA_m}\left[
            \left(E_{o_m}^{\vb{o}_{<m}}\otimes I_{C_m}\right)
            \left[
                \left(\Lambda_{\dep}-\Lambda_{\vb{e},s}\right)
                \otimes\id_{A_mC_m}
            \right]
            \!\left(\tau_m^{\vb{o}_{<m}}\right)
        \right]
    \right\|_1.
\end{align}
Using the Pauli-error representation from Section~\ref{SM_Section_1}, the channel difference satisfies
\begin{align}
    &
    \left[
        \left(\Lambda_{\dep}-\Lambda_{\vb{e},s}\right)
        \otimes\id_{A_mC_m}
    \right](\tau)
    =
    -\frac{3s\epsilon}{4^n}
    \sum_{\vb{x}\in\mathbb{Z}_2^{2n}}
    (-1)^{\langle\vb{x},\vb{e}\rangle}
    \left(P_{\vb{x}}\otimes I_{A_mC_m}\right)
    \tau
    \left(P_{\vb{x}}\otimes I_{A_mC_m}\right).
\end{align}
The following lemma bounds the average of the resulting expression.
\begin{lemma}\label{SM_lemma_quantum_adaptive_measurement_restriction}
Let $S$ be an $n$-qubit system, and let $A$ and $C$ be arbitrary quantum systems.
Let $\rho_{SAC}$ be a density operator, and let $\{E_o\}_o$ be a POVM on $SA$.
Suppose that for every outcome with $\Tr E_o>0$ there exists a density operator $\sigma_A^o$ such that
\begin{align}
    \frac{E_o}{\Tr E_o}
    \leq
    2^\eta I_S\otimes\sigma_A^o,
    \qquad
    \sum_o\Tr(E_o)\sigma_A^o
    =
    2^n I_A.
\end{align}
Then
\begin{align}
    &
    \mathds{E}_{\vb{e}\neq0}
    \sum_o
    \left\|
        \frac{1}{4^n}
        \sum_{\vb{x}\in\mathbb{Z}_2^{2n}}
        (-1)^{\langle\vb{x},\vb{e}\rangle}
        \Tr_{SA}\left[
            \left(E_o\otimes I_C\right)
            \left(P_{\vb{x}}\otimes I_{AC}\right)
            \rho_{SAC}
            \left(P_{\vb{x}}\otimes I_{AC}\right)
        \right]
    \right\|_1
    \leq
    3\cdot2^{-(n-\eta)/2}.
\end{align}
\end{lemma}
Applying Lemma~\ref{SM_lemma_quantum_adaptive_measurement_restriction} to each $\tau_m^{\vb{o}_{<m}}$ and using $\sum_{\vb{o}_{<m}}Q(\vb{o}_{<m})=1$ gives
\begin{align}
    \mathds{E}_{\vb{e}\neq0}
    \left\|\Omega_{m,m}^{\vb{e},s}-\Omega_{m-1,m}^{\vb{e},s}\right\|_1
    \leq
    9\epsilon\,2^{-(n-\eta)/2}.
\end{align}
Combining this estimate with the propagation bound and the hybrid telescoping sum gives
\begin{align}
    \mathds{E}_{\vb{e}\neq0}\TVD(Q,R_{\vb{e}})
    &\leq
    \frac{1}{4}
    \sum_{s=\pm1}
    \sum_{m=1}^{N}
    9\epsilon\,2^{-(n-\eta)/2}
    =
    \frac{9}{2}N\epsilon\,2^{-(n-\eta)/2}.
\end{align}
Combining this upper bound with Eq.~\eqref{Eq_SM_Coherent_TVD_Lower} yields
\begin{align}
    N
    =
    \Omega\!\left(\epsilon^{-1}2^{(n-\eta)/2}\right).
\end{align}
This proves the measurement-restricted case of Theorem~2 and the corresponding coherent-adaptive case of Corollary~1.

\subsection{Auxiliary lemmas}
The following lemmas establish the single-round trace-norm bounds used in the two coherent-adaptive lower-bound proofs.
We first derive a one-bit Fourier consequence of the leftover-hash lemma and then apply it to the input- and measurement-restricted models.
\subsubsection{A one-bit application of the leftover-hash lemma}

\begin{lemma}
\label{SM_lemma_LHL_2}
Let
\begin{align}
    \omega_{XE}
    :=
    \sum_{\vb{x}\in\mathbb{Z}_2^{2n}}
    \ketbra{\vb{x}}{\vb{x}}_X
    \otimes
    \omega_E^{\vb{x}}
\end{align}
be a normalized classical--quantum state, where each $\omega_E^{\vb{x}}$ is a subnormalized positive semidefinite operator.
Suppose that
\begin{align}
    H_{\min}(X\mid E)_\omega
    \geq
    h.
\end{align}
Then
\begin{align}
    \mathds{E}_{\vb{u}\in\mathbb{Z}_2^{2n}}
    \left\|
        \sum_{\vb{x}\in\mathbb{Z}_2^{2n}}
        (-1)^{\langle\vb{x},\vb{u}\rangle}
        \omega_E^{\vb{x}}
    \right\|_1
    \leq
    2^{(1-h)/2}.
    \label{Eq_SM_OneBitFourierLHL}
\end{align}
Here, $\langle\cdot,\cdot\rangle$ is the symplectic inner product defined in Section~\ref{SM_Section_1}.
\end{lemma}

\begin{proof}
For each $\vb{u}\in\mathbb{Z}_2^{2n}$, define $f_{\vb{u}}(\vb{x}):=\langle\vb{x},\vb{u}\rangle$.
For distinct $\vb{x},\vb{x}^{\prime}\in\mathbb{Z}_2^{2n}$,
\begin{align}
    \Pr_{\vb{u}}\!\left[
        f_{\vb{u}}(\vb{x})
        =
        f_{\vb{u}}(\vb{x}^{\prime})
    \right]
    =
    \Pr_{\vb{u}}\!\left[
        \langle\vb{x}\oplus\vb{x}^{\prime},\vb{u}\rangle
        =
        0
    \right]
    =
    \frac{1}{2}.
\end{align}
Therefore, $\mathcal{F}:=\{f_{\vb{u}}:\vb{u}\in\mathbb{Z}_2^{2n}\}$ is two-universal.
For $\vb{u}\in\mathbb{Z}_2^{2n}$ and $z\in\mathbb{Z}_2$, define
\begin{align}
    \omega_E^{[f_{\vb{u}},z]}
    &:=
    \sum_{\substack{
        \vb{x}\in\mathbb{Z}_2^{2n}\\
        f_{\vb{u}}(\vb{x})=z
    }}
    \omega_E^{\vb{x}},
    \qquad
    \omega_{ZFE}
    :=
    \frac{1}{|\mathcal{F}|}
    \sum_{z\in\mathbb{Z}_2}
    \sum_{f_{\vb{u}}\in\mathcal{F}}
    \ketbra{z}{z}_Z
    \otimes
    \ketbra{f_{\vb{u}}}{f_{\vb{u}}}_F
    \otimes
    \omega_E^{[f_{\vb{u}},z]}.
\end{align}
The leftover-hash lemma~\cite{Tomamichel_2011}, applied to the two-universal family $\mathcal{F}$ with one-bit output, gives
\begin{align}
    \min_{\sigma_{FE}}
    \frac{1}{2}
    \left\|
        \omega_{ZFE}
        -
        \frac{1}{2}I_Z\otimes\sigma_{FE}
    \right\|_1
    \leq
    \frac{1}{2}\sqrt{2^{1-h}},
\end{align}
where the minimum is over density operators $\sigma_{FE}$.
For $z\in\mathbb{Z}_2$, define
\begin{align}
    \omega_{FE}^{(z)}
    :=
    \frac{1}{|\mathcal{F}|}
    \sum_{f_{\vb{u}}\in\mathcal{F}}
    \ketbra{f_{\vb{u}}}{f_{\vb{u}}}_F
    \otimes
    \omega_E^{[f_{\vb{u}},z]}.
\end{align}
Then $\omega_{ZFE}=\sum_{z\in\mathbb{Z}_2}\ketbra{z}{z}_Z\otimes\omega_{FE}^{(z)}$.
Let $\sigma_{FE}$ attain the preceding minimum.
Block diagonality in $Z$ and the triangle inequality give
\begin{align}
    \sqrt{2^{1-h}}
    &\geq
    \left\|
        \omega_{ZFE}
        -
        \frac{1}{2}I_Z\otimes\sigma_{FE}
    \right\|_1
    \geq
    \left\|
        \omega_{FE}^{(0)}
        -
        \omega_{FE}^{(1)}
    \right\|_1
    =
    \mathds{E}_{\vb{u}}
    \left\|
        \omega_E^{[f_{\vb{u}},0]}
        -
        \omega_E^{[f_{\vb{u}},1]}
    \right\|_1
    =
    \mathds{E}_{\vb{u}}
    \left\|
        \sum_{\vb{x}\in\mathbb{Z}_2^{2n}}
        (-1)^{\langle\vb{x},\vb{u}\rangle}
        \omega_E^{\vb{x}}
    \right\|_1.
\end{align}
This proves Lemma~\ref{SM_lemma_LHL_2}.
\end{proof}

\subsubsection{Proof of Lemma~\ref{SM_lemma_quantum_adaptive_input_restriction}}

\begin{proof}
By the assumption $H_{\min}(S\mid A)_{\rho}\geq-\eta$, there exists a density operator $\sigma_A$ such that
\begin{align}
    \rho_{SA}
    \leq
    2^\eta
    \left(
        I_S\otimes\sigma_A
    \right).
\end{align}
We encode the uniformly Pauli-conjugated states into a classical--quantum state and apply Lemma~\ref{SM_lemma_LHL_2}.
For each $\vb{x}\in\mathbb{Z}_2^{2n}$, define
\begin{align}
    b_{\vb{x}}
    :=
    \left(
        P_{\vb{x}}\otimes I_A
    \right)
    \rho_{SA}
    \left(
        P_{\vb{x}}\otimes I_A
    \right)
    \leq
    2^\eta
    \left(
        I_S\otimes\sigma_A
    \right).
\end{align}
Define the normalized classical--quantum state
\begin{align}
    \omega_{XSA}
    :=
    \frac{1}{4^n}
    \sum_{\vb{x}\in\mathbb{Z}_2^{2n}}
    \ketbra{\vb{x}}{\vb{x}}_X
    \otimes
    b_{\vb{x}}.
\end{align}
The bound on $b_{\vb{x}}$ implies
\begin{align}
    \omega_{XSA}
    \leq
    2^{-(n-\eta)}
    I_X
    \otimes
    \left(
        2^{-n}I_S\otimes\sigma_A
    \right).
\end{align}
Since $2^{-n}I_S\otimes\sigma_A$ is a density operator, it follows that
\begin{align}
    H_{\min}(X\mid SA)_\omega
    \geq
    n-\eta.
\end{align}
Applying Lemma~\ref{SM_lemma_LHL_2} and then conditioning the uniform average on $\vb{e}\neq0$ gives
\begin{align}
    &
    \mathds{E}_{\vb{e}\neq0}
    \left\|
        \frac{1}{4^n}
        \sum_{\vb{x}\in\mathbb{Z}_2^{2n}}
        (-1)^{\langle\vb{x},\vb{e}\rangle}
        \left(
            P_{\vb{x}}\otimes I_A
        \right)
        \rho_{SA}
        \left(
            P_{\vb{x}}\otimes I_A
        \right)
    \right\|_1
    \leq
    \frac{4^n}{4^n-1}
    2^{(1-n+\eta)/2}
    \leq
    3 \cdot 2^{-(n-\eta)/2}.
\end{align}
This proves Lemma~\ref{SM_lemma_quantum_adaptive_input_restriction}.
\end{proof}

\subsubsection{Proof of Lemma~\ref{SM_lemma_quantum_adaptive_measurement_restriction}}

\begin{proof}
The proof follows the same strategy as the input-restricted case, but we additionally retain the measurement outcome in a classical register.
For each $\vb{x}\in\mathbb{Z}_2^{2n}$ and each POVM outcome $o$, define
\begin{align}
    X_{\vb{x}}
    &:=
    \left(P_{\vb{x}}\otimes I_{AC}\right)
    \rho_{SAC}
    \left(P_{\vb{x}}\otimes I_{AC}\right),
    \qquad
    B_o(\vb{x})
    :=
    \Tr_{SA}\left[
        \left(E_o\otimes I_C\right)
        X_{\vb{x}}
    \right].
\end{align}
For every outcome with $\Tr E_o>0$, define
\begin{align}
    F_o
    &:=
    2^\eta\Tr(E_o)
    \left(I_S\otimes\sigma_A^o\right),
    \qquad
    D_o
    :=
    2^\eta\Tr(E_o)
    \Tr_{SA}\left[
        \left(I_S\otimes\sigma_A^o\otimes I_C\right)
        \rho_{SAC}
    \right].
\end{align}
For an outcome with $\Tr E_o=0$, we have $E_o=0$ and set $F_o=D_o=0$.
The measurement restriction implies $F_o\geq E_o$.
Because $F_o$ acts as the identity on $S$, it commutes with $P_{\vb{x}}\otimes I_A$, and hence $\Tr_{SA}[(F_o\otimes I_C)X_{\vb{x}}]=D_o$.
Using cyclicity of the partial trace over $SA$, we obtain
\begin{align}
    D_o-B_o(\vb{x})
    &=
    \Tr_{SA}\left[
        \left(\sqrt{F_o-E_o}\otimes I_C\right)
        X_{\vb{x}}
        \left(\sqrt{F_o-E_o}\otimes I_C\right)
    \right]
    \geq
    0.
\end{align}
Thus, $B_o(\vb{x})\leq D_o$ for every $\vb{x}$ and $o$.
The second condition in the measurement restriction gives
\begin{align}
    \sum_o\Tr D_o
    =
    2^\eta
    \Tr\left[
        \left(
            I_S\otimes
            \sum_o\Tr(E_o)\sigma_A^o
            \otimes I_C
        \right)
        \rho_{SAC}
    \right]
    =
    2^{n+\eta}.
\end{align}
Define
\begin{align}
    \omega_{XOC}
    &:=
    \frac{1}{4^n}
    \sum_{\vb{x}\in\mathbb{Z}_2^{2n}}
    \ketbra{\vb{x}}{\vb{x}}_X
    \otimes
    \sum_o
    \ketbra{o}{o}_O
    \otimes
    B_o(\vb{x}),
    \qquad
    \sigma_{OC}
    :=
    2^{-(n+\eta)}
    \sum_o
    \ketbra{o}{o}_O
    \otimes
    D_o.
\end{align}
The completeness of the POVM implies $\sum_o\Tr B_o(\vb{x})=1$ for every $\vb{x}$; therefore, $\omega_{XOC}$ is normalized.
The preceding trace identity shows that $\sigma_{OC}$ is a density operator.
The bound $B_o(\vb{x})\leq D_o$ implies
\begin{align}
    \omega_{XOC}
    \leq
    2^{-(n-\eta)}
    I_X\otimes\sigma_{OC}.
\end{align}
Since $\sigma_{OC}$ is a density operator, it follows that
\begin{align}
    H_{\min}(X\mid OC)_\omega
    \geq
    n-\eta.
\end{align}

Applying Lemma~\ref{SM_lemma_LHL_2}, using block diagonality in $O$, and then conditioning the uniform average on $\vb{e}\neq0$ gives
\begin{align}
    &
    \mathds{E}_{\vb{e}\neq0}
    \sum_o
    \left\|
        \frac{1}{4^n}
        \sum_{\vb{x}\in\mathbb{Z}_2^{2n}}
        (-1)^{\langle\vb{x},\vb{e}\rangle}
        B_o(\vb{x})
    \right\|_1\leq
    \frac{4^n}{4^n-1}
    2^{(1-n+\eta)/2}
    \leq
    3 \cdot 2^{-(n-\eta)/2}.
\end{align}
This proves Lemma~\ref{SM_lemma_quantum_adaptive_measurement_restriction}.
\end{proof}

\section{Bounds for conjugate-state learning}

In this section, we prove Theorem~3 and Corollary~2 of the main text.
We first recall the precise statements.
\medskip

\noindent\textbf{Theorem 3 (main text).}
Assume $d\geq2$ and $0<\epsilon<1/6$.
Consider a copywise adaptive protocol in which each copy of
$\rho\otimes\rho^*$ is measured separately and no quantum memory is retained across distinct copies.
If every normalized nonzero POVM effect satisfies the reduction-criterion restriction, then learning all displacement amplitudes with success probability at least $2/3$ requires
\begin{align}
    N=\Omega\left(\epsilon^{-2}\sqrt{d}\right)
\end{align}
copies.

\medskip
\noindent\textbf{Corollary 2 (main text).}
Under the same protocol model, for some fixed $\eta\geq0$, suppose that every normalized nonzero POVM effect satisfies
\begin{align}
    \frac{M_o}{\Tr M_o}\leq2^\eta I_S\otimes\sigma_A^o
\end{align}
for some density operator $\sigma_A^o$.
Then learning all displacement amplitudes with success probability at least $2/3$ requires
\begin{align}
    N=\Omega\left(\epsilon^{-2}\sqrt{d\,2^{-\eta}}\right).
    \label{Eq_SM_Conjugate_Target}
\end{align}

We prove the quantitative bound in Eq.~\eqref{Eq_SM_Conjugate_Target}.
Setting $\eta=0$ yields Theorem~3 because every normalized effect satisfying the reduction criterion obeys the condition above.

The proof follows the hypothesis-testing strategy used for Pauli-channel learning.
We construct states whose displacement amplitude is perturbed in a uniformly random direction and reveal this direction only after all measurements have been completed.
For the conjugate-state pair $\rho\otimes\rho^*$, the perturbation contains both a term linear in its random sign and a sign-independent quadratic term.
After sign averaging, Lemma~\ref{SM_lemma_StateLearning_0} reduces the total variation distance to two averaged response quantities.
Lemmas~\ref{SM_lemma_StateLearning_1} and~\ref{SM_lemma_StateLearning_2} bound these quantities by $O(d^{-1})$ and
$O(\sqrt{2^\eta/d})$, respectively, yielding Eq.~\eqref{Eq_SM_Conjugate_Target}.

\subsection{Preliminaries}

We collect the displacement-operator identities used in the hard-instance construction and in the proof of Lemma~\ref{SM_lemma_StateLearning_2}.
Let
$\mathcal I_d:=\{0,1,\ldots,d-1\}$ and
$\omega:=e^{2\pi i/d}$.
Elements of $\mathbb Z_d$ are identified with their canonical representatives in $\mathcal I_d$, while signed integers are retained when they appear in the phase of a displacement operator.

Let $X$ and $Z$ denote the shift and clock operators, and follow the convention of Ref.~\cite{King_2024}:
\begin{align}
X\ket{j}
&=
\ket{j+1},
\qquad
Z\ket{j}
=
\omega^j\ket{j},
\qquad
ZX
=
\omega XZ,
\qquad
D_{q,p}
:=
e^{i\pi qp/d}X^qZ^p.
\end{align}
For a state $\rho$, its displacement amplitudes are
$y_{q,p}(\rho):=\Tr(D_{q,p}\rho)$.
The displacement operators satisfy
\begin{align}
D_{q,p}\ket{j}
&=
e^{i\pi(q+2j)p/d}\ket{j+q},
\quad
D_{q,p}^{\dagger}
=
D_{-q,-p},
\quad
D_{q,p}^{*}
=
D_{q,-p},
\quad
D_{q,p}^{T}
=
D_{-q,p},
\quad
\Tr\left(
D_{q,p}^{\dagger}D_{q',p'}
\right)
=
d\delta_{q,q'}\delta_{p,p'}.
\end{align}

We use the following two maximally entangled bases:
\begin{align}
\ket{\Phi_{a,b}}
&:=
\frac{1}{\sqrt d}
\sum_{j\in\mathcal I_d}
\omega^{bj}
\ket{j+a}\ket{-j},
\qquad
\ket{\Psi_{a,b}}
:=
\frac{1}{\sqrt d}
\sum_{j\in\mathcal I_d}
\omega^{bj}
\ket{j+a}\ket{j}.
\end{align}
They obey
\begin{align}
\braket{\Phi_{a,b}}{\Phi_{a',b'}}
=
\delta_{a,a'}\delta_{b,b'}, &\quad
\braket{\Psi_{a,b}}{\Psi_{a',b'}}
=
\delta_{a,a'}\delta_{b,b'}, \\
\left(
D_{q,p}\otimes D_{q,p}^{T}
\right)
\ket{\Phi_{a,b}}
=
\omega^{ap-bq}\ket{\Phi_{a,b}}, & \quad
\left(
D_{q,p}\otimes D_{q,p}^{*}
\right)
\ket{\Psi_{a,b}}
=
\omega^{ap-bq}\ket{\Psi_{a,b}}.
\end{align}
Consequently,
\begin{align}
D_{q,p}\otimes D_{q,p}^{T}
=
\sum_{a,b\in\mathcal I_d}
\omega^{ap-bq}
\ketbra{\Phi_{a,b}}{\Phi_{a,b}},
\qquad
D_{q,p}\otimes D_{q,p}^{*}
=
\sum_{a,b\in\mathcal I_d}
\omega^{ap-bq}
\ketbra{\Psi_{a,b}}{\Psi_{a,b}}.
\end{align}

\subsection{Lower-bound proof}
Assume $0<\epsilon<1/6$ and define $\epsilon_1:=3\sqrt{2}\epsilon$.
For each nonzero $(q,p)\in\mathbb{Z}_d^2$ and $r\in\{\pm1\}$, consider the hypotheses
\begin{align}
    H_0:\quad
    \rho_0
    &=
    \frac{I}{d},
    \qquad
    H_1:\quad
    \rho_{q,p,r}
    =
    \frac{1}{d}
    \left(
        I+r\epsilon_1E_{q,p}
    \right).
    \label{Eq_SM_Hypothesis_state_learning}
\end{align}
Here,
\begin{align}
    E_{q,p}
    &:=
    \chi D_{q,p}
    +
    \chi^*D_{-q,-p},
    \qquad
    \chi
    :=
    \frac{1+i}{2}.
\end{align}
These operators satisfy~\cite{King_2024,asadian2016heisenberg}
\begin{align}
    E_{q,p}^{\dagger}
    =
    E_{q,p},
    \qquad
    E_{q,p}^{*}
    =
    E_{q,p}^{T}
    =
    E_{-q,p},
    \qquad
    \|E_{q,p}\|_{\operatorname{op}}
    \leq
    \sqrt{2}, \qquad
    \Tr\left(
        E_{q,p}E_{q',p'}
    \right)
    =
    d\,
    \delta_{q,q'}
    \delta_{p,p'}.
\end{align}
For nonzero $(q,p)$, the operator $E_{q,p}$ is traceless, and $\epsilon_1\|E_{q,p}\|_{\operatorname{op}}\leq6\epsilon<1$.
Hence, every $\rho_{q,p,r}$ is a density operator.
Moreover,
\begin{align}
    \left|
        \Tr\left(
            D_{q,p}\rho_{q,p,r}
        \right)
    \right|
    \geq
    3\epsilon,
    \qquad
    \Tr\left(
        D_{q,p}\rho_0
    \right)
    =
    0.
\end{align}
Here, the first inequality follows from
$\left|\Tr(D_{q,p}E_{q,p})\right|\geq d/\sqrt{2}$ and
$\epsilon_1=3\sqrt{2}\epsilon$.
Let $\hat{y}_{q,p}$ denote the learner's estimate of $y_{q,p}(\rho)$.
For the $m$th copy of $\rho\otimes\rho^*$, let $S_m$ and $A_m$ denote the registers containing $\rho$ and $\rho^*$, respectively.
Consider the following hypothesis-testing game.
\begin{enumerate}
    \item The referee samples a nonzero displacement index $z=(q,p)\in\mathbb{Z}_d^2\setminus\{(0,0)\}$ and a sign $r\in\{\pm1\}$ uniformly at random.
    \item The referee selects $H_0$ or $H_1$ from Eq.~\eqref{Eq_SM_Hypothesis_state_learning} with equal probability and gives the player $N$ copies of the corresponding conjugate-state pair.
    \item In round $m$, the player measures only $S_mA_m$ using a POVM $\{M_{o_m}^{\vb{o}_{<m}}\}_{o_m}$ selected according to the previous measurement outcomes.
    \item Every normalized nonzero POVM effect satisfies $M_{o_m}^{\vb{o}_{<m}}/\Tr M_{o_m}^{\vb{o}_{<m}}\leq2^\eta I_{S_m}\otimes\sigma_{o_m}^{\vb{o}_{<m}}$ for some density operator $\sigma_{o_m}^{\vb{o}_{<m}}$ on $A_m$, and all copies are measured before the referee reveals $z$.
    \item The referee reveals $z$, but not $r$, and asks the player to identify the selected hypothesis.
\end{enumerate}
The proof below uses only the effect-wise inequality in item~4; hence, it applies in particular to the POVM class appearing in Corollary~2.
After $z=(q,p)$ is revealed, the player declares $H_1$ if $|\hat{y}_{q,p}|>3\epsilon/2$ and declares $H_0$ otherwise.
On the learner's success event, $H_0$ implies $|\hat{y}_{q,p}|\leq\epsilon$, whereas $H_1$ and the reverse triangle inequality imply $|\hat{y}_{q,p}|\geq2\epsilon$.
Thus, a learner with success probability at least $2/3$ yields a testing strategy with success probability at least $2/3$.
Consequently, any sample-complexity lower bound for this testing game also applies to copywise adaptive conjugate-state learning.

Let $Q(\vb{o}_{1:N})$ denote the measurement-outcome distribution under $H_0$, and let $R_{z,r}(\vb{o}_{1:N})$ denote the corresponding distribution under $H_1$ with spike location $z=(q,p)$ and sign $r$.
Because the sign $r$ is not revealed, define the sign-averaged alternative distribution
\begin{align}
    R_z(\vb{o}_{1:N})
    :=
    \mathds{E}_{r=\pm1}R_{z,r}(\vb{o}_{1:N}).
\end{align}
Le Cam's two-point method~\cite{lecam1973convergence} and the preceding testing reduction imply
\begin{align}
    \frac{1}{3}
    \leq
    \mathds{E}_{z\neq(0,0)}
    \TVD\!\left(
        Q(\vb{o}_{1:N}),
        R_z(\vb{o}_{1:N})
    \right).
    \label{Eq_SM_StateLearning_TVD_Lower}
\end{align}
Writing $\overline{z}:=(-q,p)$, we have
\begin{align}
    \rho_{q,p,r}\otimes\rho_{q,p,r}^{*}
    =
    \frac{1}{d^2}
    \left[
        I\otimes I
        +
        r\epsilon_1
        \left(
            E_z\otimes I
            +
            I\otimes E_{\overline{z}}
        \right)
        +
        \epsilon_1^2
        E_z\otimes E_{\overline{z}}
    \right].
\end{align}
For every outcome with $\Tr M_{o_m}^{\vb{o}_{<m}}>0$, define the normalized POVM effect
\begin{align}
    \tau_{o_m}^{\vb{o}_{<m}}
    :=
    \frac{M_{o_m}^{\vb{o}_{<m}}}{\Tr M_{o_m}^{\vb{o}_{<m}}}.
\end{align}
For such an outcome, define
\begin{align}
    G_z^{\vb{o}_{<m}}(o_m)
    &:=
    \Tr\left[
        \left(
            E_z\otimes I
            +
            I\otimes E_{\overline{z}}
        \right)
        \tau_{o_m}^{\vb{o}_{<m}}
    \right],
    \qquad
    H_z^{\vb{o}_{<m}}(o_m)
    :=
    \Tr\left[
        \left(
            E_z\otimes E_{\overline{z}}
        \right)
        \tau_{o_m}^{\vb{o}_{<m}}
    \right].
\end{align}
The conditional measurement-outcome probabilities satisfy
\begin{align}
    Q(o_m\mid\vb{o}_{<m})
    &=
    \frac{\Tr M_{o_m}^{\vb{o}_{<m}}}{d^2},
    \qquad
    R_{z,r}(o_m\mid\vb{o}_{<m})
    =
    Q(o_m\mid\vb{o}_{<m})
    \left[
        1
        +
        r\epsilon_1G_z^{\vb{o}_{<m}}(o_m)
        +
        \epsilon_1^2H_z^{\vb{o}_{<m}}(o_m)
    \right].
\end{align}
If $\Tr M_{o_m}^{\vb{o}_{<m}}=0$, then $M_{o_m}^{\vb{o}_{<m}}=0$, and we set $G_z^{\vb{o}_{<m}}(o_m)=H_z^{\vb{o}_{<m}}(o_m)=0$.
For brevity, write $\vb{o}:=\vb{o}_{1:N}$ and $[x]_+:=\max\{x,0\}$.
Measurement-outcome sequences satisfying $Q(\vb{o})=0$ do not contribute to the TVD.
For every sequence satisfying $Q(\vb{o})>0$, the chain rule gives
\begin{align}
    \left[
        Q(\vb{o})-R_z(\vb{o})
    \right]_+
    =
    Q(\vb{o})
    \left[
        1
        -
        \mathds{E}_{r=\pm1}
        \prod_{m=1}^{N}
        \left(
            1
            +
            r\epsilon_1G_z^{\vb{o}_{<m}}(o_m)
            +
            \epsilon_1^2H_z^{\vb{o}_{<m}}(o_m)
        \right)
    \right]_+.
\end{align}
For every branch with $Q(o_m\mid\vb{o}_{<m})>0$, we have
\begin{align}
    1
    +
    r\epsilon_1G_z^{\vb{o}_{<m}}(o_m)
    +
    \epsilon_1^2H_z^{\vb{o}_{<m}}(o_m)
    =
    \frac{R_{z,r}(o_m\mid\vb{o}_{<m})}{Q(o_m\mid\vb{o}_{<m})}
    \geq
    0,
    \qquad
    r\in\{\pm1\}.
\end{align}

\begin{lemma}\label{SM_lemma_StateLearning_0}
Let $a_1,\ldots,a_N,b_1,\ldots,b_N\in\mathbb{R}$. Suppose that for every $m$, $1+b_m+a_m \ge 0$ and $1+b_m-a_m \ge 0$.
Then
\begin{align}
    \left[
    1-\mathds{E}_{r=\pm1}
    \prod_{m=1}^{N}(1+r a_m+b_m)
    \right]_+
    \le
    \sum_{m=1}^{N}\left(a_m^2+2|b_m|\right),
\end{align}
where $[x]_+ := \max\{x,0\}$.
\end{lemma}
Applying Lemma~\ref{SM_lemma_StateLearning_0} with $a_m=\epsilon_1G_z^{\vb{o}_{<m}}(o_m)$ and $b_m=\epsilon_1^2H_z^{\vb{o}_{<m}}(o_m)$ gives
\begin{align}
    \TVD(Q,R_z)
    &=
    \sum_{\vb{o}}
    \left[
        Q(\vb{o})-R_z(\vb{o})
    \right]_+ \leq
    \epsilon_1^2
    \sum_{\vb{o}}Q(\vb{o})
    \sum_{m=1}^{N}
    \left[
        \left(G_z^{\vb{o}_{<m}}(o_m)\right)^2
        +
        2\left|H_z^{\vb{o}_{<m}}(o_m)\right|
    \right].
\end{align}
Because $Q$ is independent of $z$, averaging over the nonzero displacement indices gives
\begin{align}
    \mathds{E}_{z\neq(0,0)}\TVD(Q,R_z)
    &\leq
    \epsilon_1^2
    \sum_{\vb{o}}Q(\vb{o})
    \sum_{m=1}^{N}
    \left[
        \mathds{E}_{z\neq(0,0)}\left(G_z^{\vb{o}_{<m}}(o_m)\right)^2
        +
        2\mathds{E}_{z\neq(0,0)}\left|H_z^{\vb{o}_{<m}}(o_m)\right|
    \right].
\end{align} 
It remains to bound the two averaged terms in the last expression.

\begin{lemma}\label{SM_lemma_StateLearning_1}
For every round $m$, every sequence of previous measurement outcomes $\vb{o}_{<m}$, and every outcome $o_m$, with $G_z^{\vb{o}_{<m}}(o_m)=0$ when $\Tr M_{o_m}^{\vb{o}_{<m}}=0$, one has
\begin{align}
    \mathds{E}_{z\neq(0,0)}\!\left[\left(G_z^{\vb{o}_{<m}}(o_m)\right)^2\right]
    \leq
    \frac{6}{d}.
\end{align}
\end{lemma}

\begin{lemma}\label{SM_lemma_StateLearning_2}
Suppose that $\Tr M_{o_m}^{\vb{o}_{<m}}>0$ and that
\begin{align}
    \frac{M_{o_m}^{\vb{o}_{<m}}}{\Tr M_{o_m}^{\vb{o}_{<m}}}
    \leq
    2^\eta I_{S_m}\otimes\sigma_{o_m}^{\vb{o}_{<m}}
\end{align}
for some density operator $\sigma_{o_m}^{\vb{o}_{<m}}$ on $A_m$.
Then
\begin{align}
    \mathds{E}_{z\neq(0,0)}\!\left[\left|H_z^{\vb{o}_{<m}}(o_m)\right|\right]
    \leq
    4\sqrt{\frac{2^\eta}{d}}.
\end{align}
\end{lemma}

Applying Lemmas~\ref{SM_lemma_StateLearning_1} and~\ref{SM_lemma_StateLearning_2} and using $\sum_{\vb{o}}Q(\vb{o})=1$ gives
\begin{align}
    \mathds{E}_{z\neq(0,0)}\TVD(Q,R_z)
    &\leq
    \epsilon_1^2N
    \left[
        \frac{6}{d}
        +
        8\sqrt{\frac{2^\eta}{d}}
    \right]
    =
    18N\epsilon^2
    \left[
        \frac{6}{d}
        +
        8\sqrt{\frac{2^\eta}{d}}
    \right].
\end{align}
For $d\geq2$ and $\eta\geq0$, we have $6/d\leq6\sqrt{2^\eta/d}$, and hence
\begin{align}
    \mathds{E}_{z\neq(0,0)}\TVD(Q,R_z)
    \leq
    252N\epsilon^2\sqrt{\frac{2^\eta}{d}}.
\end{align}
Combining this estimate with Eq.~\eqref{Eq_SM_StateLearning_TVD_Lower} yields
\begin{align}
    N
    =
    \Omega\!\left(
        \epsilon^{-2}\sqrt{d\,2^{-\eta}}
    \right).
\end{align}
This proves Corollary~2.
Setting $\eta=0$ yields Theorem~3 because every normalized POVM effect satisfying the reduction criterion obeys the corresponding $\eta=0$ operator inequality.

\subsection{Auxiliary lemmas}

\subsubsection{Proof of Lemma~\ref{SM_lemma_StateLearning_0}}

\begin{proof}
Define
\begin{align}
A_{\pm}
&:=
\prod_{m=1}^{N}
\left(
1\pm a_m+b_m
\right),
\qquad
s_m
:=
\sqrt{
(1+b_m)^2-a_m^2
}.
\end{align}
The assumptions imply $A_{\pm}\geq0$ and $s_m\geq0$.
The arithmetic--geometric mean inequality therefore gives
\begin{align}
\mathds{E}_{r=\pm1}
\prod_{m=1}^{N}
\left(
1+r a_m+b_m
\right)
=
\frac{A_++A_-}{2}
\geq
\sqrt{A_+A_-}
=
\prod_{m=1}^{N}s_m.
\end{align}
Setting $u_m:=\min\{s_m,1\}$, we obtain
\begin{align}
\left[
1-
\mathds{E}_{r=\pm1}
\prod_{m=1}^{N}
\left(
1+r a_m+b_m
\right)
\right]_+
&\leq
\left[
1-
\prod_{m=1}^{N}s_m
\right]_+
\leq
1-
\prod_{m=1}^{N}u_m
\leq
\sum_{m=1}^{N}
[1-s_m]_+.
\end{align}
If $s_m\geq1$, then $[1-s_m]_+=0$.
If $s_m<1$, then
\begin{align}
[1-s_m]_+
=
\frac{1-s_m^2}{1+s_m}
\leq
1-s_m^2
=
a_m^2-2b_m-b_m^2
\leq
a_m^2+2|b_m|.
\end{align}
Summing over $m$ proves the claim.
\end{proof}

\subsubsection{Proof of Lemma~\ref{SM_lemma_StateLearning_1}}

\begin{proof}
Fix $m$, $\vb{o}_{<m}$, and $o_m$, and write
$M:=M_{o_m}^{\vb{o}_{<m}}$.
The claim is immediate if $\Tr M=0$.
Otherwise, define
\begin{align}
\tau
&:=
\frac{M}{\Tr M},
\qquad
\tau_S
:=
\Tr_A\tau,
\qquad
\tau_A
:=
\Tr_S\tau.
\end{align}
Then
\begin{align}
G_z^{\vb{o}_{<m}}(o_m)
&=
\Tr(E_z\tau_S)
+
\Tr(E_{\overline z}\tau_A),
\qquad
\left(
G_z^{\vb{o}_{<m}}(o_m)
\right)^2
\leq
2\left[\Tr(E_z\tau_S)\right]^2
+
2\left[\Tr(E_{\overline z}\tau_A)\right]^2.
\end{align}
The orthogonality relation
$\Tr(E_zE_{z'})=d\delta_{z,z'}$ implies, for every Hermitian operator $X$,
\begin{align}
\mathds{E}_{z}
\left[
\left(
\Tr(E_zX)
\right)^2
\right]
=
\frac{1}{d}\Tr(X^2).
\end{align}
Since $z\mapsto\overline z$ is a bijection of $\mathbb Z_d^2$, it follows that
\begin{align}
\mathds{E}_{z}
\left[
\left(
G_z^{\vb{o}_{<m}}(o_m)
\right)^2
\right]
\leq
\frac{2}{d}\Tr(\tau_S^2)
+
\frac{2}{d}\Tr(\tau_A^2)
\leq
\frac{4}{d},
\qquad
\mathds{E}_{z\neq(0,0)}
\left[
\left(
G_z^{\vb{o}_{<m}}(o_m)
\right)^2
\right]
\leq
\frac{d^2}{d^2-1}
\frac{4}{d}
\leq
\frac{6}{d}.
\end{align}
This proves Lemma~\ref{SM_lemma_StateLearning_1}.
\end{proof}

\subsubsection{Proof of Lemma~\ref{SM_lemma_StateLearning_2}}

\begin{proof}
For brevity, write
$S:=S_m$, $A:=A_m$,
$M:=M_{o_m}^{\vb{o}_{<m}}$,
$\tau:=M/\Tr M$, and
$\sigma_A:=\sigma_{o_m}^{\vb{o}_{<m}}$.
The assumption gives
$\tau\leq2^\eta I_S\otimes\sigma_A$.

For $z=(q,p)$, define
\begin{align}
A_z
&:=
D_{q,p}\otimes D_{-q,p},
\qquad
B_z
:=
D_{q,p}\otimes D_{q,-p},
\qquad
a_z
:=
\Tr(A_z\tau),
\qquad
b_z
:=
\Tr(B_z\tau).
\end{align}
Using the definition of $E_z$, we have
\begin{align}
E_z\otimes E_{\overline z}
&=
\frac{i}{2}
\left(
A_z-A_z^{\dagger}
\right)
+
\frac{1}{2}
\left(
B_z+B_z^{\dagger}
\right),
\qquad
|H_z|
\leq
|a_z|+|b_z|.
\label{Eq_SM_H_bound_ab}
\end{align}

Define
\begin{align}
\alpha_{a,b}
&:=
\bra{\Phi_{a,b}}\tau\ket{\Phi_{a,b}},
\qquad
\beta_{a,b}
:=
\bra{\Psi_{a,b}}\tau\ket{\Psi_{a,b}}.
\end{align}
The diagonalizations from the preliminaries give
\begin{align}
a_{q,p}
&=
\sum_{a,b\in\mathbb Z_d}
\omega^{ap-bq}\alpha_{a,b},
\qquad
b_{q,p}
=
\sum_{a,b\in\mathbb Z_d}
\omega^{ap-bq}\beta_{a,b}.
\end{align}
Since both Bell families are orthonormal bases and every Bell state is maximally entangled,
\begin{align}
\sum_{a,b}\alpha_{a,b}
=
\sum_{a,b}\beta_{a,b}
=
1,
\qquad
0
\leq
\alpha_{a,b},\beta_{a,b}
\leq
\frac{2^\eta}{d},
\qquad
\sum_{a,b}\alpha_{a,b}^2
\leq
\frac{2^\eta}{d},
\qquad
\sum_{a,b}\beta_{a,b}^2
\leq
\frac{2^\eta}{d}.
\end{align}
The Parseval--Plancherel identity~\cite{Terras1999} therefore yields
\begin{align}
\sum_{q,p}|a_{q,p}|^2
&\leq
2^\eta d,
\qquad
\sum_{q,p}|b_{q,p}|^2
\leq
2^\eta d.
\end{align}
Combining Eq.~\eqref{Eq_SM_H_bound_ab} with the Cauchy--Schwarz inequality gives
\begin{align}
\mathds{E}_{z\neq(0,0)}|H_z|
\leq
\frac{1}{d^2-1}
\sum_{z\neq(0,0)}
\left(
|a_z|+|b_z|
\right)
\leq
\frac{1}{\sqrt{d^2-1}}
\left[
\left(
\sum_z|a_z|^2
\right)^{1/2}
+
\left(
\sum_z|b_z|^2
\right)^{1/2}
\right]
\leq
4\sqrt{\frac{2^\eta}{d}}.
\end{align}
\end{proof}

\bibliography{reference}